\title{Categories for Dynamic Epistemic Logic}
\author{Kohei Kishida%
\thanks{%
Kishida's research has been supported by the grants FA9550-12-1-0136 of the U.S. AFOSR and EP/N018745/1 of EPSRC\@.
An acknowledgment also goes to the anonymous referees for insightful comments and suggestions, which helped to improve the paper.
}%
\institute{Department of Computer Science\\
University of Oxford\\
Oxford, United Kingdom}
\email{kohei.kishida@cs.ox.ac.uk}
}
\DeclareSymbolFont{symbolsA}{U}{pxsya}{m}{n}
\DeclareSymbolFont{symbolsC}{U}{pxsyc}{m}{n}
\DeclareMathSymbol{\Box}{\mathord}{symbolsA}{"03}
\DeclareMathSymbol{\blacksquare}{\mathord}{symbolsA}{"04}
\DeclareMathSymbol{\Diamond}{\mathord}{symbolsC}{"5E}
\DeclareMathSymbol{\Diamondblack}{\mathord}{symbolsC}{"5F}
\def\ldot{\mathpunct{.}}
\newcommand{\mimp}{\Rightarrow}
\def\BlackBox{\blacksquare}
\def\BlackDiamond{\Diamondblack}
\newcommand{\nec}[1]{[#1]}
\newcommand{\pos}[1]{\langle #1 \rangle}
\renewcommand{\int}{\mathop{\mathrm{int}}}
\newcommand{\cmp}{\mathrel{\circ}}
\def\incto{\hookrightarrow}
\newcommand{\pw}{\mathcal{P}}
\newcommand{\Op}{\mathcal{O}}
\newcommand{\relto}{\mathrel{\ooalign{$\to$\crcr\hss\raisebox{0.1ex}{$\shortmid\mspace{1mu}$}\hss}}}
\newcommand{\opprel}[1]{{{#1}{}^\dagger}}
\newcommand{\Scott}[1]{{\llbracket{#1}\rrbracket}}
\newcommand{\Pre}{\mathrm{Pre}}
\newcommand{\NN}{\mathbb{N}}
\newcommand{\2}{\mathbf{2}}
\newcommand{\C}{{\mathbf{C}}}
\newcommand{\Rel}{{\mathbf{Rel}}}
\newcommand{\Sets}{{\mathbf{Sets}}}
\newcommand{\CABA}{{\mathbf{CABA}}}
\newcommand{\CABAvee}{{\CABA_{\vee}}}
\newcommand{\CABAwedge}{{\CABA_{\wedge}}}
\newcommand{\Kr}{{\mathbf{Kr}}}
\newcommand{\Krb}{{\Kr_\mathbf{B}}}
\newcommand{\CABAO}{{\mathbf{CABAO}}}
\newcommand{\CABAOc}{{\CABAO_\mathbf{C}}}
\newcommand{\Preord}{{\mathbf{Preord}}}
\newcommand{\Equiv}{{\mathbf{Equiv}}}
\newcommand{\Coalg}{{\mathbf{Coalg}}}
\def\op{{\mathrm{op}}}
\def\co{{\mathrm{co}}}
\def\coop{{\mathrm{coop}}}
\newcommand{\sys}[1]{\ensuremath{\mathbf{#1}}}
\newtheorem{theorem}{Theorem}
\theoremstyle{definition}
\newtheorem{corollary}{Corollary}
\newtheorem{definition}{Definition}
\newtheorem{fact}{Fact}
\begin{document}
\maketitle

\begin{abstract}
The primary goal of this paper is to recast the semantics of modal logic, and dynamic epistemic logic (DEL) in particular, in category-theoretic terms.
We first review the category of relations and categories of Kripke frames, with particular emphasis on the duality between relations and adjoint homomorphisms.
Using these categories, we then reformulate the semantics of DEL in a more categorical and algebraic form.
Several virtues of the new formulation will be demonstrated:
The DEL idea of updating a model into another is captured naturally by the categorical perspective---%
which emphasizes a family of objects and structural relationships among them, as opposed to a single object and structure on it.
Also, the categorical semantics of DEL can be merged straightforwardly with a standard categorical semantics for first-order logic, providing a semantics for first-order DEL\@.
\end{abstract}

\section{Introduction}\label{sec:intro}

\emph{Dynamic epistemic logic} (DEL) is a powerful tool at the core of ``logical dynamics'' \cite{ben11}, a logical approach to the dynamics of information and interaction.
Its semantics is general, flexible, and applicable to a wide range of informational processes in which rational agents update their knowledge and belief.
It is also malleable and admits a variety of extra structures---%
e.g.\ probabilities, preferences, questions, awareness.
It therefore forms a basis for logical studies of various aspects of agency in information and interaction.

The primary goal of this paper is to reformulate the standard semantics of DEL in category-theoretic terms.%
\footnote{\strut%
See \cite{awo10} for a clear and conceptual exposition of category theory.
}
One central idea of that semantics is that, to interpret DEL, we need to consider not just a single model but a family of models, in which one model is ``updated'' into another by a certain construction that models a given type of informational process.
This is, in fact, a kind of idea that is treated naturally from the perspective of category theory.
Category theory emphasizes a family of objects and structural relationships among them, as opposed to a single object and structure on it.
Moreover, it can compare structural relationships at a ``higher level'' among different categories, e.g.\ between a category and another that is obtained by adding extra structure to the former.
All this makes category theory excellent at capturing structural properties of a given family of models and constructions in a conceptually unifying fashion.
And this paper will show that the semantics of DEL is an instance of this.

\autoref{sec:Kripke} will lay out Kripke semantics for propositional classical modal logic from a categorical perspective.
Many of the concepts and facts covered in \autoref{sec:Kripke}, such as subframes or duality results, are found in standard expositions such as \cite{cha97,bla01};
yet we will put more emphasis on the categorical structure of Kripke frames and on ``higher'' duality between relations and algebra operations.
In \autoref{sec:del} we will use the categorical structure of Kripke frames to shed new, categorical light on the standard semantics of DEL\@.
We are not to propose a new semantics in this section, and the facts that will be covered are already known in literature (e.g.\ the standard exposition \cite{dit08}).
The point will instead be to use a categorical formulation and thereby to highlight structural properties in the standard semantics of DEL\@, uncovering the dual, algebraic ideas behind the semantics.
\autoref{sec:quantification} will give a demonstration of a virtue of our categorical, structural perspective, by showing how to extend DEL to the first order with a new, ``sheaf'' semantics.%
\footnote{\label{fn:constant.domain}\strut%
A first-order extension of dynamic logic was given in \cite{har79}.
The first attempt to extend DEL to the first order was \cite{koo07}, which introduced terms that referred to epistemic agents (and hence had a different format of logic than in this paper).
Both of these extensions used constant domains for interpreting quantification.
Constant domains can be seen as a (rather rigid) subcase of sheaves (and less flexible than sheaves in general);
their axiomatization requires the so-called Barcan formula and other axioms be added to the simple union of modal logic and first-order logic (see \autoref{thm:FOML.completeness}).
}
Clearly, our knowledge and belief and their update often involve quantified propositions, and therefore can be subject to ``first-order DEL''\@.
It may nevertheless appear extremely complicated to introduce gadgets for quantification to the DEL framework.
The structural approach, however, enables us to treat the DEL structure and the first-order structure as two modules to be simply combined.
This will make obvious the conceptual power of the approach.
\autoref{sec:connections} will discuss connections between our approach and some of the preceding categorical ones (such as the coalgebraic one).
Then \autoref{sec:conclusion} will conclude the paper, referring to lines of future work.

This article adopts the following convention when displaying facts and results:
Already known results are called ``Facts'', with references attached in footnotes.
Results that have not been explicitly stated before (to the best of the author's knowledge) are called ``Theorems'' or ``Corollaries'' (the latter follow from already known results immediately).%
\footnote{\strut%
We thank an anonymous reviewer for their suggestion of this convention.
}

\section{A Categorical Look at Kripke Semantics}\label{sec:Kripke}

This preliminary section lays out a categorical perspective on Kripke semantics for propositional classical modal logic.
We mostly consider a single pair of unary modal operators $\Box$ and $\Diamond$, but everything extends to a family of operator pairs (as we will see in the final paragraph of \autoref{sec:Kripke.frames}).

\subsection{The Category of Relations}\label{sec:Kripke.relations}

Let us first review basic facts about the category of binary relations.
Given sets $X$ and $Y$, we write $R : X \relto Y$ to mean that $R$ is a relation ``from $X$ to $Y$'', i.e.\ $R \subseteq X \times Y$.
Relations $R_1 : X \relto Y$ and $R_2 : Y \relto Z$, sharing the same $Y$, can be composed to form another $R_1;R_2 : X \relto Z$, by defining $w R_1;R_2 u$ iff $w R_1 v R_2 u$ for some $v \in Y$.
The composition is also written $R_2 \cmp R_1$ (note the opposite orders of writing $R_1$ and $R_2$).
The identity relation $w = v$ on $X$, written $1_X$, is the identity of this composition, meaning that $1_X;R = R = R;1_Y$ for every $R : X \relto Y$.
Then sets and binary relations form a category, $\Rel$.
This category comes with some extra structures, of which the most relevant to this article are the following:%
\footnote{\strut%
Categories with the following structures are studied e.g.\ in \cite{lam99}, where they are called ``ordered categories with involution''.
}
\begin{itemize}
\item
$\Rel$ is a ``dagger category'':
Each relation $R : X \relto Y$ has its opposite, $\opprel{R} : Y \relto X$, so that $v \opprel{R} w$ iff $w R v$.
This operation $\opprel{-}$ satisfies $\opprel{\opprel{R}} = R$ and $\opprel{(R_2 \cmp R_1)} = \opprel{R_1} \cmp \opprel{R_2}$, and extends to a self-dual functor $\opprel{-} : \Rel^\op \to \Rel$ by setting $\opprel{X} = X$ for each set $X$.%
\footnote{\strut%
$\Rel$ admits an even stronger structure of ``dagger compact (closed) category'', but this structure does not play an explicit r\^ole in this article.
See Subsection 3.4.2 of \cite{coe11} for $\Rel$ as a dagger compact category.
}
\item
$\Rel$ is ``locally posetal'':
For each pair of sets $X$ and $Y$, the set $\Rel(X, Y)$ of relations from $X$ to $Y$ is a poset ordered by $\subseteq$.
That is, relations $R_1, R_2 : X \relto Y$ satisfy the ``higher'' relation $R_1 \subseteq R_2$ if $w R_1 v$ implies $w R_2 v$.
Moreover, posets $\Rel(X, Y)$ and $\Rel(Y, Z)$ interact with each other in such a way that if $R_1 \subseteq R_2$ in $\Rel(X, Y)$ and $R_3 \subseteq R_4$ in $\Rel(Y, Z)$ then $R_3 \cmp R_1 \subseteq R_4 \cmp R_2$.
\item
The two structures then interact in such a way that the functor $\dagger$ gives order isomorphisms $\dagger : \Rel(X, Y) \to \Rel(Y, X)$;
i.e., $R_1 \subseteq R_2$ iff $\opprel{R_1} \subseteq \opprel{R_2}$.
\end{itemize}
The locally posetal structure makes $\Rel$ a higher category with objects (``$0$-cells'') $X$, arrows (``$1$-cells'') $R$ between objects, and higher arrows (``$2$-cells'') $\subseteq$ between arrows ($1$-cells).%
\footnote{\strut%
See Chapter XII of \cite{mac98} for this type of higher categories, ``bicategories''.
$\Rel$ appears in Subsection 1.5 (i) of \cite{lac10} as an example of bicategory.
A more general account of bicategories of relations is found in \cite{car87}.
A similar approach, in terms of categories called ``allegories'', is taken in Chapter 2 of \cite{fre90}, which also gives a thorough account of ideas in this subsection.
}
In addition, $\Rel$ satisfies
\begin{enumerate}
\setcounter{enumi}{\value{equation}}
\item\label{itm:modularity}
the ``law of modularity'':\
$((R_2 \cmp R_1) \cap R_3) \subseteq R_2 \cmp (R_1 \cap (\opprel{R_2} \cmp R_3))$ (i.e., if $w R_1 ; R_2 u$ and $w R_3 u$ then there is $v$ such that $w R_1 v$ and $v R_2 u$ and hence $w R_3 ; \opprel{R_2} v$).
\setcounter{equation}{\value{enumi}}
\end{enumerate}

Many properties of relations can be expressed with $\subseteq$.
E.g., $R : X \relto X$ is reflexive, i.e.\ $w = v$ implies $w R v$, iff $1_X \subseteq R$.
In particular, a relation $R : X \relto Y$ is a function iff both $1_X \subseteq \opprel{R} \cmp R$ and $R \cmp \opprel{R} \subseteq 1_Y$.
In addition, a function $f : X \to Y$ is injective iff $\opprel{f} \cmp f = 1_X$ and surjective iff $f \cmp \opprel{f} = 1_Y$.
Functions are thus a subcase of relations.
Moreover, the composition $R_2 \cmp R_1$ of relations is just the usual composition of functions when $R_1$ and $R_2$ are functions.
So the category $\Sets$ of sets and functions is a subcategory of $\Rel$.
On the other hand, $\Sets$ gives rise to $\Rel$ as follows.
A pair of functions $f : Z \to X$ and $g : Z \to Y$ from the same domain $Z$ is called ``jointly monic'' if $\langle f, g \rangle : Z \to X \times Y :: u \mapsto (f(u), g(u))$ is injective (or, equivalently, $(\opprel{f} \cmp f) \cap (\opprel{g} \cmp g) = 1_Z$).
Then a relation $R : X \relto Y$ corresponds to a jointly monic pair of functions, viz.\ the projections $r_1 : R \to X :: (w, v) \mapsto w$ and $r_2 : R \to Y :: (w, v) \mapsto v$ from the set $R \subseteq X \times Y$, so that the pair $(r_1, r_2)$ ``tabulates'' the relation $R : X \relto Y$, meaning that $R = r_2 \cmp \opprel{r_1}$.%
\footnote{\strut%
The correspondence mentioned here is not quite 1--1.
For two jointly monic pairs of functions $(r_1 : Z \to X, r_2 : Z \to Y)$ and $(r'_1 : Z' \to X, r'_2 : Z' \to Y)$, if there is a bijection $f : Z \to Z'$ such that $r_i = r'_i \cmp f$ for $i = 1, 2$, then the two pairs correspond to the same relation $R \subseteq X \times Y$.
One can of course identify such isomorphic pairs of jointly monic pairs and force the correspondence to be 1--1.
}

\subsection{Relation-Modality Biduality}\label{sec:Kripke.duality}

Kripke semantics uses binary relations to interpret unary modal operators.
A Kripke frame is a set $X$ paired with a binary relation $R : X \relto X$, and a Kripke model is a Kripke frame $(X, R)$ equipped with an assignment $\Scott{-}$ of subsets $\Scott{p} \subseteq X$ to propositional variables $p$.
In fact we extend the notation to all propositions $\varphi$, so that $w \in \Scott{\varphi} \subseteq X$ means that $\varphi$ is true at $w$.
Now, given a relation $R : X \relto Y$, define two monotone maps $\exists_R, \forall_R : \pw X \to \pw Y$ by
\begin{align*}
\exists_R(S)
& = \{\, v \in Y \mid w \in S \text{ for some } w \in X \text{ such that } w R v \,\}, \\
\forall_R(S)
& = \{\, v \in Y \mid w \in S \text{ for all } w \in X \text{ such that } w R v \,\} .
\end{align*}
Then, for a relation $R : X \relto X$ on a set $X$, $\exists_\opprel{R}, \forall_\opprel{R} : \pw X \to \pw X$ interpret the ``possibility'' operator $\Diamond$ and the ``necessity'' operator $\Box$, respectively---%
i.e.\
\begin{align}
\label{eq:Kripke.semantics.modal}
\Scott{\Diamond \varphi} & = \exists_\opprel{R} \Scott{\varphi} , &
\Scott{\Box \varphi} & = \forall_\opprel{R} \Scott{\varphi} .
\end{align}

An important property of $\exists_-$ and $\forall_-$ is that every relation $R$ gives an adjunction (or ``Galois connection'') $\exists_R \dashv \forall_\opprel{R}$, meaning that $\exists_R(S_1) \subseteq S_2$ iff $S_1 \subseteq \forall_\opprel{R}(S_2)$.
(And it also gives $\exists_\opprel{R} \dashv \forall_R$ via $\opprel{R}$.)
Therefore left adjoints $\exists_R$ preserve arbitrary joins and right adjoints $\forall_R$ preserve arbitrary meets.
It also needs noting that a relation $f : X \relto Y$ is a function iff $\exists_\opprel{f} = \forall_\opprel{f}$, in which case $\exists_\opprel{f} = \forall_\opprel{f}$ is the inverse-image map $f^{-1} : \pw Y \to \pw X$.
So, for every function $f$, the map $f^{-1} = \exists_\opprel{f} = \forall_\opprel{f}$ preserves all joins and meets, and moreover $\exists_f \dashv f^{-1} \dashv \forall_f$, which is one of the fundamental facts for categorical logic.%
\footnote{\strut%
The idea that $\Box$ and $\Diamond$ are a relational generalization of $\forall_f$ and $\exists_f$ is laid out in \cite{her11} from a more general categorical perspective of ``bicategories of relations and spans''.
As observed in \cite{her11}, we can define modal operators $\BlackDiamond$ and $\BlackBox$ that are ``opposite'' to $\Diamond$ and $\Box$, and interpret them with $\exists_R$ and $\forall_R$;
then we have adjunctions $\BlackDiamond \dashv \Box$ and $\Diamond \dashv \BlackBox$, which also appear in \cite{kur13}.
These adjunctions are typical of the ``past'' and ``future'' modalities of temporal logic, as observed in \cite{kar98}.
}

One of the most fundamental categorical facts to the interpretation \eqref{eq:Kripke.semantics.modal} is the equivalence of $\Rel$ and categories of complete atomic Boolean algebras (CABAs).
Let $\CABAvee$ and $\CABAwedge$ be the categories of CABAs with all-join-preserving maps and with all-meet-preserving maps, respectively, and then

\begin{fact}\label{thm:equivalence.Kripke}
$\exists_- :: R \mapsto \exists_R$ and $\forall_- :: R \mapsto \forall_R$ extend to equivalences of categories $\exists_- : \Rel \to \CABAvee$ and $\forall_- : \Rel \to \CABAwedge$, both sending a set $X$ to its powerset $\pw X$, while every CABA has the form $\pw X$.%
\footnote{\strut%
See, e.g., Exercise 5.2.5 in \cite{jac16} for essentially the same fact.
}
Putting this in ``concrete'' terms,
\begin{itemize}
\item
The relations $R : X \to Y$ correspond 1-1 to the all-join-preserving maps $\exists_R : \pw X \to \pw Y$, and also 1-1 to the all-meet-preserving maps $\forall_R : \pw X \to \pw Y$.
In other words, for every pair of sets $X$ and $Y$, each of $\exists_-$ and $\forall_-$ induces a bijection from $\Rel(X, Y)$ to the set $\C(\pw X, \pw Y)$ of arrows of $\C = \CABAvee, \CABAwedge$ from $\pw X$ to $\pw Y$.%
\end{itemize}
\end{fact}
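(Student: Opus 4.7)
The plan is to verify the three defining properties of a categorical equivalence---functoriality, full-faithfulness on hom-sets, and essential surjectivity on objects---for $\exists_-$, and then transfer the result to $\forall_-$ by a complementation argument.

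For functoriality of $\exists_- : \Rel \to \CABAvee$, we have already observed that each $\exists_R$ is a left adjoint (to $\forall_\opprel{R}$) and hence preserves all joins, so it genuinely lands in $\CABAvee$. Preservation of identities $\exists_{1_X} = 1_{\pw X}$ is immediate, while preservation of composition $\exists_{R_2 \cmp R_1} = \exists_{R_2} \cmp \exists_{R_1}$ is a routine unfolding: $v \in \exists_{R_2 \cmp R_1}(S)$ iff there exist $w \in S$ and $u$ with $w R_1 u R_2 v$, iff $v \in \exists_{R_2}(\exists_{R_1}(S))$.

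Full-faithfulness reduces to exhibiting an inverse to $\exists_- : \Rel(X, Y) \to \CABAvee(\pw X, \pw Y)$ on each hom-set. Given a join-preserving $h : \pw X \to \pw Y$, define $R_h : X \relto Y$ by $w R_h v$ iff $v \in h(\{w\})$. Faithfulness follows because $\exists_R(\{w\}) = \{\, v \mid w R v \,\}$ already determines $R$; fullness follows because every $S \subseteq X$ is the join $\bigcup_{w \in S} \{w\}$, and both $h$ and $\exists_{R_h}$ preserve joins, so they agree on all $S$ as soon as they agree on singletons, which they do by construction.

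For essential surjectivity, I would invoke the classical representation theorem that every complete atomic Boolean algebra $A$ is isomorphic, as a complete Boolean algebra, to $\pw(\mathrm{At}\, A)$ via $a \mapsto \{\, x \in \mathrm{At}\, A \mid x \leq a \,\}$; atomicity yields injectivity and surjectivity of this map, while completeness and the Boolean structure give preservation of all operations. Finally, for the $\forall$-version, one checks that complementation gives an isomorphism of categories $\CABAvee \iso \CABAwedge$ sending $h$ to $h^c := \neg \cmp h \cmp \neg$, and that $\forall_R = (\exists_R)^c$ by a direct calculation. Composing this isomorphism with the $\exists_-$ equivalence yields the $\forall_-$ equivalence, with the same assignment $X \mapsto \pw X$ on objects. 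I expect the only step that is not pure bookkeeping to be the atom representation for CABAs, which is a genuinely algebraic input rather than formal category-theoretic manipulation.
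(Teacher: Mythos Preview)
Your argument is correct: functoriality, the singleton-based inverse on hom-sets, the CABA representation via atoms, and the complementation transfer to $\forall_-$ are all standard and sound. Note, however, that the paper does not actually prove this statement; it is recorded as a ``Fact'' with only a literature reference (Exercise~5.2.5 in \cite{jac16}), in keeping with the paper's convention that known results are cited rather than proved. So there is no in-paper proof to compare against---your write-up simply supplies the details the paper deliberately omits.
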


In fact, higher versions of \autoref{thm:equivalence.Kripke} are relevant to modal logic.
Recall that $\Rel$ is equipped with higher arrows between arrows, i.e.\ the relations $\subseteq$ among relations $R_1, R_2 : X \relto Y$.
Similarly, $\C = \CABAvee, \CABAwedge$ are also equipped with the relation $\leqslant$ among arrows $h_1, h_2 : \pw X \to \pw Y$, by setting $h_1 \leqslant h_2$ iff $h_1(S) \subseteq h_2(S)$ for all $S \in \pw X$, making $\C(\pw X, \pw Y)$ a poset.
Then

\begin{fact}\label{thm:biduality.Kripke.1}
$\exists_- : \Rel \to \CABAvee$ is a (higher) equivalence,%
\footnote{\strut%
See Subsection 1.5 (i) of \cite{lac10} for essentially the same fact.
}
meaning that
\begin{itemize}
\item
$R_1 \subseteq R_2$ iff $\exists_{R_1} \leqslant \exists_{R_2}$,
\end{itemize}
i.e., each bijection $\exists_- : \Rel(X, Y) \to \CABAvee(\pw X, \pw Y)$ is an order isomorphism.
\end{fact}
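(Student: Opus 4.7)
The plan is to prove the biconditional $R_1 \subseteq R_2$ iff $\exists_{R_1} \leqslant \exists_{R_2}$ directly from the definition of $\exists_R$, exploiting the observation that the value of $\exists_R$ on singletons fully determines $R$. Combining this with \autoref{thm:equivalence.Kripke}, which already provides the bijection $\exists_- : \Rel(X, Y) \to \CABAvee(\pw X, \pw Y)$, will upgrade that bijection to an order isomorphism, which is exactly what the higher-equivalence claim amounts to.

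For the forward direction, I assume $R_1 \subseteq R_2$. Given any $S \in \pw X$ and any $v \in \exists_{R_1}(S)$, the definition of $\exists_{R_1}$ provides some $w \in S$ with $w R_1 v$; by hypothesis also $w R_2 v$, so $v \in \exists_{R_2}(S)$. Hence $\exists_{R_1}(S) \subseteq \exists_{R_2}(S)$ for every $S$, i.e.\ $\exists_{R_1} \leqslant \exists_{R_2}$.

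For the reverse direction, I suppose $\exists_{R_1} \leqslant \exists_{R_2}$ and take an arbitrary $(w, v) \in R_1$. Instantiating the hypothesis at the singleton $S = \{w\}$ gives $v \in \exists_{R_1}(\{w\}) \subseteq \exists_{R_2}(\{w\})$, and by definition of $\exists_{R_2}$ this says exactly $w R_2 v$. Hence $R_1 \subseteq R_2$.

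There is no serious obstacle here; the argument is essentially a definitional check that the equivalence of \autoref{thm:equivalence.Kripke} respects the $2$-cell structure on both sides. The only conceptual point worth flagging is the singleton trick in the reverse direction, which is an instance of the general fact that an all-join-preserving map on $\pw X$ is determined by its restriction to atoms --- a dual reflection of the fact that a relation $R : X \relto Y$ is determined by its fibres $\{v \mid w R v\}$ over points $w \in X$.
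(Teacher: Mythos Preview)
Your argument is correct. Note, however, that the paper does not actually supply a proof of this statement: it is labeled a ``Fact'' and carries only a footnote reference to \cite{lac10}, in keeping with the paper's convention that already known results are stated without proof. So there is no in-paper proof to compare against; your elementary verification via the definition of $\exists_R$ and the singleton trick is exactly the standard way to see it, and it fills in what the paper leaves to the literature.
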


Yet there are more versions of this result that are less frequently mentioned but equally important.
Since $\Rel$ has two levels of arrows, $R$ and $\subseteq$, there are four types of (higher) functors $F$ from $\Rel$ to another higher category $\C$, viz.,
$F : \Rel \to \C$, with the direction of neither $R$ nor $\subseteq$ flipped;
$F : \Rel^\op \to \C$, with just $R$ flipped;
$F : \Rel^\co \to \C$, with just $\subseteq$ flipped;
$F : \Rel^\coop \to \C$, with both $R$ and $\subseteq$ flipped.%
\footnote{\strut%
$\Rel^\op$, $\Rel^\co$, and $\Rel^\coop$ are $\Rel$
with just $R$ flipped;
with just $\subseteq$ flipped;
and
with both $R$ and $\subseteq$ flipped.
See Subsection 1.6 of \cite{lac10}.
}
Therefore there can be four versions of equivalence (or duality).

\begin{corollary}\label{thm:biduality.Kripke.2}
$\forall_- : \Rel^\co \to \CABAwedge$ is a ``$2$-cell duality'', i.e.\ an equivalence flipping $\subseteq$ (but not $R$).
Concretely put,
\begin{itemize}
\item
$R_1 \subseteq R_2$ iff $\forall_{R_2} \leqslant \forall_{R_1}$ (note the reversed order),
\end{itemize}
i.e., each bijection $\forall_- : \Rel(X, Y) \to \CABAwedge(\pw X, \pw Y)$ is an order-reversing isomorphism.
\end{corollary}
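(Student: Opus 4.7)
The plan is to reduce the corollary to \autoref{thm:biduality.Kripke.1} by exploiting the adjunction $\exists_{\opprel{R}} \dashv \forall_R$ recorded in \autoref{sec:Kripke.duality}. Note that \autoref{thm:equivalence.Kripke} already supplies the $1$-cell content of the equivalence: every CABA has the form $\pw X$, so $\forall_-$ is essentially surjective, and on each hom-set $\forall_- : \Rel(X,Y) \to \CABAwedge(\pw X, \pw Y)$ is a bijection. Hence the only remaining task is the $2$-cell claim $R_1 \subseteq R_2 \iff \forall_{R_2} \leqslant \forall_{R_1}$.

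First I would invoke the standard fact that a pointwise inequality between right adjoints is equivalent, with direction reversed, to a pointwise inequality between their left adjoints: given $L_i \dashv R_i$ for $i = 1,2$, one has $R_2 \leqslant R_1$ iff $L_1 \leqslant L_2$, proved by taking $S$ to be $R_i(T)$ (resp.\ $L_i(T)$) and using the unit/counit. Applied to $\exists_{\opprel{R_i}} \dashv \forall_{R_i}$, this yields the equivalence $\forall_{R_2} \leqslant \forall_{R_1} \iff \exists_{\opprel{R_1}} \leqslant \exists_{\opprel{R_2}}$.

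Next, \autoref{thm:biduality.Kripke.1} converts the right-hand side into $\opprel{R_1} \subseteq \opprel{R_2}$. Finally, the third bullet of \autoref{sec:Kripke.relations} states that $\dagger$ is an order isomorphism on each hom-poset of $\Rel$, so $\opprel{R_1} \subseteq \opprel{R_2} \iff R_1 \subseteq R_2$. Chaining the three equivalences gives the corollary.

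Nothing in the argument is a real obstacle; the work is essentially unwinding adjunctions. The one point requiring care is bookkeeping for the dagger in the subscripts, i.e.\ making sure that \autoref{thm:biduality.Kripke.1} is applied to $\opprel{R_1}, \opprel{R_2}$ and not to $R_1, R_2$, so that the single order reversal comes out on the $\forall$ side and not on the $R$ side. (An equivalent route uses De Morgan duality $\forall_R = \neg \cmp \exists_R \cmp \neg$ in each CABA: complementation is an order-reversing involution on $\pw X$, so composing with it transports \autoref{thm:biduality.Kripke.1} along an order-reversing bijection $\CABAvee(\pw X, \pw Y) \to \CABAwedge(\pw X, \pw Y)$ and yields the same conclusion.)
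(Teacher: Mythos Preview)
Your argument is correct. The paper states \autoref{thm:biduality.Kripke.2} as a corollary with no explicit proof, leaving the derivation from \autoref{thm:equivalence.Kripke} and \autoref{thm:biduality.Kripke.1} implicit; your reduction via the adjunction $\exists_{\opprel{R}} \dashv \forall_R$ (already recorded in \autoref{sec:Kripke.duality}) together with the order-preservation of $\dagger$ is exactly the kind of one-line unwinding the paper intends, and your alternative De~Morgan route is an equally legitimate way to fill the gap.
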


Moreover, composing $\exists_-$ and $\forall_-$ with the self-dual functor $\dagger : \Rel^\op \to \Rel$, which is a ``$1$-cell duality'', i.e.\ an equivalence flipping $R$ but not $\subseteq$, we obtain

\begin{corollary}\label{thm:biduality.Kripke.3}
$\exists_\opprel{-} : \Rel^\op \to \CABAvee$ is a $1$-cell duality, and $\forall_\opprel{-} : \Rel^\coop \to \CABAwedge$ is a ``biduality'', i.e.\ an equivalence flipping both $R$ and $\subseteq$.
Concretely put,
\begin{enumerate}
\setcounter{enumi}{\value{equation}}
\item\label{thm:biduality.Kripke.functoriality}
$\exists_\opprel{(R_2 \cmp R_1)} = \exists_\opprel{R_1} \cmp \exists_\opprel{R_2}$ and $\forall_\opprel{(R_2 \cmp R_1)} = \forall_\opprel{R_1} \cmp \forall_\opprel{R_2}$ (note the flipped orders of composition),
\item\label{thm:biduality.Kripke.exists}
$R_1 \subseteq R_2$ iff $\exists_\opprel{R_1} \leqslant \exists_\opprel{R_2}$,
\item\label{thm:biduality.Kripke.forall}
$R_1 \subseteq R_2$ iff $\forall_\opprel{R_2} \leqslant \forall_\opprel{R_1}$ (note the reversed order),
\setcounter{equation}{\value{enumi}}
\end{enumerate}
i.e., $\exists_\opprel{-}$ and $\forall_\opprel{-}$ induce order-preserving isomorphisms $\exists_\opprel{-} : \Rel(Y, X) \to \CABAvee(\pw X, \pw Y)$ and order-reversing isomorphisms $\forall_\opprel{-} : \Rel(Y, X) \to \CABAwedge(\pw X, \pw Y)$, respectively.
\end{corollary}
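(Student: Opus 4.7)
The plan is to deduce this corollary by pre-composing the equivalences from \autoref{thm:biduality.Kripke.1} and \autoref{thm:biduality.Kripke.2} with the dagger self-duality $\opprel{-}$, which was recorded in \autoref{sec:Kripke.relations} to act as a $1$-cell duality: it flips $R$ but preserves $\subseteq$. Explicitly, $\exists_\opprel{-}$ factors through the dagger as $\Rel^\op \to \Rel \to \CABAvee$, and $\forall_\opprel{-}$ factors as $\Rel^\coop \to \Rel^\co \to \CABAwedge$ (the dagger now read as $\Rel^\coop \to \Rel^\co$, since it preserves $\subseteq$). Because composites of higher equivalences are higher equivalences, and $1$-cell and $2$-cell flips compose additively, the abstract classification in the corollary follows at once from \autoref{thm:biduality.Kripke.1} and \autoref{thm:biduality.Kripke.2}.

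For the itemized content, I would verify each point by unpacking the composite. Item (\ref{thm:biduality.Kripke.functoriality}) comes from the contravariance identity $\opprel{(R_2 \cmp R_1)} = \opprel{R_1} \cmp \opprel{R_2}$ recorded in \autoref{sec:Kripke.relations}, transported through the covariant functors $\exists_-$ and $\forall_-$ supplied by \autoref{thm:equivalence.Kripke}; this is precisely what forces the composition order on the algebraic side to flip. Item (\ref{thm:biduality.Kripke.exists}) chains the $2$-cell preservation of the dagger ($R_1 \subseteq R_2$ iff $\opprel{R_1} \subseteq \opprel{R_2}$) with the order-preserving bijection of \autoref{thm:biduality.Kripke.1}, giving the order-preserving isomorphism $\Rel(Y, X) \to \CABAvee(\pw X, \pw Y)$. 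Item (\ref{thm:biduality.Kripke.forall}) chains the same dagger step with the order-reversing bijection of \autoref{thm:biduality.Kripke.2}, yielding the order-reversing isomorphism $\Rel(Y, X) \to \CABAwedge(\pw X, \pw Y)$.

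To the extent that there is a main obstacle, it is purely bookkeeping: ensuring the $\op$, $\co$, and $\coop$ decorations line up correctly across the three items, since each of them sits at a slightly different point in the lattice of variance choices. Once one is satisfied that $\opprel{-}$ is a $1$-cell-only duality and that $\exists_-$, $\forall_-$ already furnish a $2$-cell equivalence and a $2$-cell duality respectively, the corollary is essentially a two-step composition performed separately on the $1$-cell and $2$-cell levels, with no new calculation needed beyond what is encoded in the three preceding facts.
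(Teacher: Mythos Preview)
Your proposal is correct and matches the paper's approach exactly: the paper states the corollary immediately after observing that it is obtained by ``composing $\exists_-$ and $\forall_-$ with the self-dual functor $\dagger : \Rel^\op \to \Rel$, which is a `$1$-cell duality','' and gives no further argument. Your explicit unpacking of the three items is more detailed than anything the paper provides, but the underlying idea---dagger precomposition transported through \autoref{thm:biduality.Kripke.1} and \autoref{thm:biduality.Kripke.2}---is identical.
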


Thus, \eqref{eq:Kripke.semantics.modal} means that the modal operators $\Diamond$ and $\Box$ are duals to the relations $R : X \relto X$, in such a ``higher'' way that the relation $\subseteq$ among the latter corresponds to the relation $\leqslant$ among the former (e.g.\ the reflexivity of $R$, i.e.\ $1_X \subseteq R$, is equivalent by \eqref{thm:biduality.Kripke.exists} to $1_{\pw X} \leqslant \exists_\opprel{R}$ and by \eqref{thm:biduality.Kripke.forall} to $\forall_\opprel{R} \leqslant 1_{\pw X}$, i.e.\ $\varphi \vdash \Diamond \varphi$ and $\Box \varphi \vdash \varphi$).%
\footnote{\strut%
See \cite{kis13} for more on correspondence results via the higher dualities.
}
This higher duality plays a fundamental r\^ole in this article as well as in Kripke semantics in general.

One more fact that will prove useful is the ``Beck-Chevalley condition'':

\begin{corollary}\label{thm:beck.chevalley}
If the following diagram is a pullback in $\Sets$, then $p \cmp \opprel{q} = \opprel{f} \cmp g$.%
\footnote{\strut%
$\Sets$ satisfies the Beck-Chevalley condition, meaning that the pullback entails $\exists_q \cmp p^{-1} = g^{-1} \cmp \exists_f$.
See Section IV.9 of \cite{mac92}.
This implies \eqref{eq:beck.chevalley} by \autoref{thm:biduality.Kripke.3}.
}
\begin{gather}\label{eq:beck.chevalley}
\begin{gathered}
\begin{tikzpicture}[x=20pt,y=20pt]
\coordinate (O) at (0,0);
\coordinate (r) at (3.25,0);
\coordinate (d) at (0,-2.5);
\node (A0) [inner sep=0.25em] at (O) {$Y \times_X Z$};
\node (A1) [inner sep=0.25em] at ($ (A0) + (r) $) {$Z$};
\node (B0) [inner sep=0.25em] at ($ (A0) + (d) $) {$Y$};
\node (B1) [inner sep=0.25em] at ($ (B0) + (r) $) {$X$};
\draw [->] (A0) -- (A1) node [pos=0.5,inner sep=2pt,above] {$q$};
\draw [->] (B0) -- (B1) node [pos=0.5,inner sep=2pt,below] {$f$};
\draw [->] (A0) -- (B0) node [pos=0.5,inner sep=2pt,left] {$p$};
\draw [->] (A1) -- (B1) node [pos=0.5,inner sep=2pt,right] {$g$};
\coordinate (A0-pb) at ($ (A0) + (1,-1) $);
\draw ($ (A0-pb) + (-0.45,0) $) -- (A0-pb) -- ($ (A0-pb) + (0,0.45) $);
\end{tikzpicture}
\end{gathered}
\end{gather}
\end{corollary}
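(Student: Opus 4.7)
The plan is to follow the route sketched in the footnote attached to the statement: apply the standard Beck--Chevalley condition for $\Sets$ at the level of powersets, and then translate the resulting identity between all-join-preserving maps back into a statement about composition of relations, using the equivalence $\exists_- : \Rel \to \CABAvee$ supplied by \autoref{thm:equivalence.Kripke} and \autoref{thm:biduality.Kripke.1}.

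Concretely, I would proceed as follows. First, invoke the classical fact (Mac Lane--Moerdijk, as cited in the footnote) that the pullback square~\eqref{eq:beck.chevalley} gives
\[
\exists_q \cmp p^{-1} \;=\; g^{-1} \cmp \exists_f
\]
as maps $\pw Y \to \pw Z$. Second, since $p$ and $g$ are functions, rewrite $p^{-1} = \exists_\opprel{p}$ and $g^{-1} = \exists_\opprel{g}$, so the equation becomes $\exists_q \cmp \exists_\opprel{p} = \exists_\opprel{g} \cmp \exists_f$. Third, apply the covariant functoriality of $\exists_-$ from \autoref{thm:equivalence.Kripke} to collapse both sides, obtaining $\exists_{q \cmp \opprel{p}} = \exists_{\opprel{g} \cmp f}$. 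Fourth, use that $\exists_-$ restricts to a bijection $\Rel(Y, Z) \to \CABAvee(\pw Y, \pw Z)$ to conclude $q \cmp \opprel{p} = \opprel{g} \cmp f$ as relations $Y \relto Z$. Finally, take the dagger of both sides, using $\opprel{(R_2 \cmp R_1)} = \opprel{R_1} \cmp \opprel{R_2}$ and $\opprel{\opprel{R}} = R$, to reach the desired $p \cmp \opprel{q} = \opprel{f} \cmp g$.

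The whole argument is essentially bookkeeping: tracking domains and codomains of each composite, the order-flip on composition under $\dagger$, and the single place where the hypothesis that $p$ and $g$ are functions is used (to identify their inverse-image maps with $\exists_\opprel{p}$ and $\exists_\opprel{g}$). No step is genuinely hard; the main obstacle, if anything, is to resist the temptation to skip the conceptual route altogether and verify the identity pointwise, since both $z\,(p \cmp \opprel{q})\,y$ and $z\,(\opprel{f} \cmp g)\,y$ plainly unfold to the single condition $f(y) = g(z)$. That direct verification is quicker but hides the point of the footnote, namely that the relational Beck--Chevalley identity is precisely the relation-level shadow of the standard powerset one, delivered automatically by the higher equivalence of \autoref{thm:biduality.Kripke.3}.
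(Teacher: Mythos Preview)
Your proposal is correct and follows exactly the route the paper indicates in its footnote: invoke the classical Beck--Chevalley identity $\exists_q \cmp p^{-1} = g^{-1} \cmp \exists_f$ and then transport it back to $\Rel$ via the equivalence $\exists_-$, which is precisely what ``This implies \eqref{eq:beck.chevalley} by \autoref{thm:biduality.Kripke.3}'' abbreviates. You have simply unpacked that one-line sketch into its constituent steps (rewriting $p^{-1}$ and $g^{-1}$ as $\exists_\opprel{p}$ and $\exists_\opprel{g}$, using functoriality and faithfulness of $\exists_-$, and applying $\dagger$), so there is nothing to add.
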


\subsection{Categories of Kripke Frames}\label{sec:Kripke.frames}

Let us now consider categories of Kripke frames.
A \emph{monotone} map from a Kripke frame $(X, R_X)$ to another $(Y, R_Y)$ is a function $f : X \to Y$ that preserves relation, i.e.\ such that $w R_X v$ implies $f(w) R_Y f(v)$.
Observe that this can equivalently be written as either of the following.
\begin{enumerate}
\setcounter{enumi}{\value{equation}}
\item\label{itm:continuous.Kripke.2}
$R_X \subseteq \opprel{f} \cmp R_Y \cmp f$ (i.e., $w R_X v$ implies $w f w' R_Y v' \opprel{f} v$ for some $w', v' \in Y$),
\item\label{itm:continuous.Kripke.3}
$f \cmp R_X \subseteq R_Y \cmp f$ (i.e., $w R_X v f v'$ implies $w f w' R_Y v'$ for some $w' \in Y$).
\setcounter{equation}{\value{enumi}}
\end{enumerate}
The formulation \eqref{itm:continuous.Kripke.3} strengthens to $f$ being a \emph{bounded morphism}, i.e.\ satisfying both \eqref{itm:continuous.Kripke.3} and
\begin{enumerate}
\setcounter{enumi}{\value{equation}}
\item\label{itm:open.Kripke.1}
$R_Y \cmp f \subseteq f \cmp R_X$ (i.e., $w f w' R_Y v'$ implies $w R_X v f v'$ for some $v \in X$),
\setcounter{equation}{\value{enumi}}
\end{enumerate}
i.e.\ satisfying
\begin{enumerate}
\setcounter{enumi}{\value{equation}}
\item\label{itm:open.Kripke.2}
$f \cmp R_X = R_Y \cmp f$.
\setcounter{equation}{\value{enumi}}
\end{enumerate}
Let us write $\Kr$ for the category of Kripke frames and monotone maps, and $\Krb$ for its subcategory of bounded morphisms.

The duality observed in \autoref{sec:Kripke.duality} immediately entails duality results between Kripke frames and ``CABAs with operators'' (CABAOs), i.e.\ CABAs equipped with all-join-preserving operators $\Diamond$.
The isomorphisms $\exists_\opprel{-} : \Rel(X, X) \to \CABAvee(\pw X, \pw X)$ in \autoref{thm:biduality.Kripke.3} mean that the Kripke frames $(X, R)$ correspond 1-1 to the CABAOs $(\pw X, \Diamond)$.
Moreover, while the functions $f : X \to Y$ and the CABA homomorphisms $h : \pw Y \to \pw X$ are dual to each other, \autoref{thm:biduality.Kripke.3} further implies (by $f^{-1} = \exists_\opprel{f} = \forall_\opprel{f}$) that \eqref{itm:open.Kripke.2} is equivalent to
\begin{enumerate}
\setcounter{enumi}{\value{equation}}
\item\label{itm:open.Kripke.algebra}
$\exists_\opprel{R_X} \cmp f^{-1} = f^{-1} \cmp \exists_\opprel{R_Y}$ (or equivalently $\forall_\opprel{R_X} \cmp f^{-1} = f^{-1} \cmp \forall_\opprel{R_Y}$),
\setcounter{equation}{\value{enumi}}
\end{enumerate}
i.e., $f^{-1}$ being a CABAO homomorphism, i.e.\ a CABA homomorphism that moreover preserves $\Diamond$ (and $\Box$), from $(\pw Y, \exists_\opprel{R_Y})$ to $(\pw X, \exists_\opprel{R_X})$.
Therefore the category $\Krb$ is dual to the category $\CABAO$ of CABAOs and CABAO homomorphisms.%
\footnote{\strut%
This duality was first shown in \cite{tho75}.
See also \cite{bla01}.
}
In fact, let us call a CABA homomorphism $h$ ``continuous'' if it has $\Diamond \cmp h \leqslant h \cmp \Diamond$, and then \autoref{thm:biduality.Kripke.3} implies that \eqref{itm:continuous.Kripke.2}--\eqref{itm:continuous.Kripke.3} are equivalent to
\begin{enumerate}
\setcounter{enumi}{\value{equation}}
\item\label{itm:continuous.Kripke.algebra}
$\exists_\opprel{R_X} \leqslant f^{-1} \cmp \exists_\opprel{R_Y} \cmp \exists_f$, or equivalently $\exists_\opprel{R_X} \cmp f^{-1} \leqslant f^{-1} \cmp \exists_\opprel{R_Y}$ (or $f^{-1} \cmp \forall_\opprel{R_Y} \cmp \forall_f \leqslant \forall_\opprel{R_X}$ or $f^{-1} \cmp \forall_\opprel{R_Y} \leqslant \forall_\opprel{R_X} \cmp f^{-1}$),
\setcounter{equation}{\value{enumi}}
\end{enumerate}
i.e.\ the continuity of $f^{-1}$.
Hence the category $\Kr$ is dual to the category $\CABAOc$ of CABAOs and continuous CABA homomorphisms \cite{ghi10}.
We should stress, however, that these duality results are merely derivative, and that the dualities in \autoref{sec:Kripke.duality} are more fundamental.
It is the latter duality that we will take essential advantage of throughout this article.

We have so far considered a single pair of operators $\Box$ and $\Diamond$, but in epistemic logic we often take a set $A$ of agents and consider a pair of operators $\nec{\alpha}$ (also written $K_\alpha$, for ``$\alpha$ knows that'') and $\pos{\alpha}$ for each agent $\alpha \in A$.
To interpret this $A$-indexed set of operator pairs, a Kripke frame $X$ needs to be equipped with an $A$-indexed set of relations $R_\alpha : X \relto X$ as well.
Let us say that a function $f : X \to Y$ from a Kripke frame $(X, R^\alpha_X)_{\alpha \in A}$ to another $(Y, R^\alpha_Y)_{\alpha \in A}$ is monotone if it preserves every $R^\alpha_X$ by satisfying \eqref{itm:continuous.Kripke.2}--\eqref{itm:continuous.Kripke.3} (with $R^\alpha_X$ in place of $R_X$), and a bounded morphism if it satisfies \eqref{itm:continuous.Kripke.2}--\eqref{itm:open.Kripke.2} for every $R^\alpha_X$ (in place of $R_X$).
Then the Kripke frames with $A$-many relations and their monotone maps or bounded morphisms form categories $\Kr_A$ and $\Krb_A$, subsuming $\Kr$ and $\Krb$ above as just a special case with $A$ a singleton.
The duality results in this section carry over straightforwardly to $\Kr_A$ and $\Krb_A$, with respect to CABAs with $A$-many operators.

\subsection{Topological Constructions for Kripke Frames}\label{sec:Kripke.topological}

Having introduced two categories of Kripke frames, it may appear to be a natural question which of the two we should adopt as ``the'' category of Kripke frames.
The answer is, however, that we need both $\Kr$ and $\Krb$.
The significance of $\Krb$ is fairly obvious and well studied.
Bounded morphisms are dual to homomorphisms preserving $\Diamond$ and $\Box$ as well as all the other connectives, and therefore closely connected to the preservation of modal logic.
Indeed, the bisimulations are precisely the ``relations in $\Krb$'' (see the final paragraph of \autoref{sec:del.pal}).
By the same token, in the coalgebraic approach to Kripke semantics, the kind of homomorphisms considered are those corresponding to bounded morphisms, and hence the considered category of coalgebras is equivalent to $\Krb$ (see \autoref{sec:connections} for more on the connection to the coalgebraic approach).
Quite arguably, $\Krb$ plays a more prominent r\^ole than $\Kr$ does, as long as the ``static'' modal logic is concerned.
Nevertheless, this statement no longer applies to the semantics of dynamic epistemic logic (DEL).
Many of the semantic constructions crucial for DEL take place in $\Kr$ but not in $\Krb$.
Indeed, to let DEL show interesting behaviors, it is essential to use monotone maps and not bounded morphisms.

The category $\Kr$ admits a wide range of constructions that are directly connected to ones in $\Sets$ using sets and functions.
They are due to

\begin{fact}\label{thm:Kr.topological}
$\Kr$ is ``topological over $\Sets$'',%
\footnote{\strut%
See Section 21 of \cite{ada90} for the definition and nice properties of topological categories.
(It may need noting that \cite{ada90} refers to $\Kr$ as $\Rel$.)
This subsection refers to Definitions 21.1 and 21.7, Example 21.8, Propositions 21.30 and 21.31, Theorem 21.9, and Proposition 21.15.
}
meaning, concretely, the following.
Given any family of functions $f_i : X \to Y_i$ ($i \in I$) to Kripke frames $(Y_i, R_i)$, the relation
\begin{gather*}
w R_X v \iff f_i(w) R_i f_i(v) \text{ for all } i \in I ,
\quad\text{i.e.}\quad
R_X = \bigcap_{i \in I} (\opprel{f_i} \cmp R_i \cmp f_i) ,
\end{gather*}
is the (unique) ``initial lift'' of $\{ f_i \}_{i \in I}$, i.e.\ the relation on $X$ such that, given any function $g : Z \to X$, all $f_i \cmp g$ are monotone from a frame $(Z, R_Z)$ iff $g$ is.
\end{fact}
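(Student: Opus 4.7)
The statement packages two claims: first, that the proposed $R_X$ genuinely lifts $X$ to a frame for which every $f_i$ is monotone, and second, that this lift is initial (uniqueness then follows automatically from the universal property). The plan is to dispatch the first claim in one line using the characterization \eqref{itm:continuous.Kripke.2} of monotonicity, then to concentrate on the universal property, whose sole nontrivial ingredient is a distributivity lemma for conjugation by a function.

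For the lift, using \eqref{itm:continuous.Kripke.2}, monotonicity of $f_i : (X, R_X) \to (Y_i, R_i)$ amounts to $R_X \subseteq \opprel{f_i} \cmp R_i \cmp f_i$, which holds by construction since $R_X$ is defined as the intersection of exactly these relations over $i \in I$. So every $f_i$ is monotone out of $(X,R_X)$.

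For initiality, fix a function $g : Z \to X$ and a relation $R_Z$ on $Z$; I must show that $g : (Z, R_Z) \to (X, R_X)$ is monotone iff every composite $f_i \cmp g : (Z, R_Z) \to (Y_i, R_i)$ is. The forward direction is just that monotone maps compose, which in the form \eqref{itm:continuous.Kripke.2} is a two-line computation combining $R_Z \subseteq \opprel{g} \cmp R_X \cmp g$ with $R_X \subseteq \opprel{f_i} \cmp R_i \cmp f_i$. The converse is the substantive step: assuming $R_Z \subseteq \opprel{g} \cmp (\opprel{f_i} \cmp R_i \cmp f_i) \cmp g$ for every $i$, I want to conclude $R_Z \subseteq \opprel{g} \cmp R_X \cmp g$. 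The crux is therefore the identity
\[ \opprel{g} \cmp \Bigl( \bigcap_{i \in I} S_i \Bigr) \cmp g \;=\; \bigcap_{i \in I} \bigl( \opprel{g} \cmp S_i \cmp g \bigr), \]
valid for any function $g : Z \to X$ and any family of relations $S_i : X \relto X$. I would prove it either by direct unfolding --- since $g$ is a function, $w \, (\opprel{g} \cmp S \cmp g) \, v$ reduces to $g(w) \, S \, g(v)$, whence both sides of the identity say ``$g(w) \, S_i \, g(v)$ for every $i$'' --- or more conceptually via Corollary~\ref{thm:biduality.Kripke.3}, by observing that $g^{-1}$ is a CABA homomorphism and hence preserves all meets, so the meet-preservation lifts through the duality to the relational side.

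Uniqueness is then immediate from the universal property: any other initial lift $R'_X$ on $X$ would, upon applying the universal property with $Z = X$ and $g = 1_X$ in both directions, satisfy both $R'_X \subseteq R_X$ and $R_X \subseteq R'_X$. The only genuine obstacle in the whole argument is isolating and justifying the intersection-distribution identity above; this is precisely where it matters that each $f_i$ is a function rather than a mere relation, and without it the proposed formula for $R_X$ would in general fail to be initial.
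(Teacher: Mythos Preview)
Your argument is correct. The distributivity identity $\opprel{g} \cmp (\bigcap_i S_i) \cmp g = \bigcap_i (\opprel{g} \cmp S_i \cmp g)$ is exactly the point, and your elementary justification---that for a function $g$ the relation $\opprel{g} \cmp S \cmp g$ holds of $(w,v)$ iff $g(w) \, S \, g(v)$---is the right one. The uniqueness argument via $g = 1_X$ is also fine.

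As for comparison with the paper: there is nothing to compare. The paper does not prove this statement at all. It is labeled a \emph{Fact}, and by the paper's stated convention Facts are already known results cited from the literature; here the reference is to Section~21 of \cite{ada90}, where $\Kr$ appears (under the name $\Rel$) as a stock example of a topological category. The only trace of an argument in the paper is the remark immediately following the Fact, recording \eqref{eq:initial.continuous}, i.e.\ that $R_X$ is the largest relation on $X$ making every $f_i$ monotone. That remark is the special case $Z = X$, $g = 1_X$ of your backward direction, and is of course trivially the definition of an intersection; your contribution over this is precisely to push the same observation through an arbitrary $g$ via the distributivity identity. So you have supplied a self-contained proof where the paper simply cites one.
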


(In fact, \autoref{thm:Kr.topological} holds of $\Kr_A$ in general, again with $R^\alpha_X$ in place of $R_X$.)
One may note that the relation $R_X$ in \autoref{thm:Kr.topological} is the largest relation on $X$ preserved by all $f_i$, since, for every relation $R$ on $X$,
\begin{gather}
\label{eq:initial.continuous}
R \subseteq R_X
\iff R \subseteq \opprel{f_i} \cmp R_i \cmp f_i \text{ (i.e.\ } f_i \text{ preserves } R \text{) for all } i \in I .
\end{gather}
It is easy to observe that initial lifts preserve many properties of relations such as reflexivity, transitivity, and symmetry.
Then the full subcategories of $\Kr$ given by those properties and combinations thereof, such as $\Preord$ of the preorders (i.e.\ reflexive and transitive relations) and $\Equiv$ of the equivalence relations, are said to be ``initially closed''.
It follows that these subcategories are also topological over $\Sets$, and that the inclusion functors have left-adjoints.%
\footnote{\strut%
There are properties that are not preserved by initial lifts.
E.g., antisymmetry is not;
in fact, the category of posets is not topological over $\Sets$.
}
E.g., the left adjoint $F : \Kr \to \Preord$ sends a Kripke frame $(X, R)$ to $(X, R^\ast)$, where $R^\ast$ is the reflexive and transitive closure of $R$.

One consequence of $\Kr$, or a subcategory such as $\Preord$, being topological over $\Sets$ is that it also has ``final lifts'', dual to initial lifts of \autoref{thm:Kr.topological}.
E.g., given a family of preorders $(X, R_\alpha)$ ($\alpha \in A$) on the same set $X$, such as ``epistemic'' relations $R_\alpha$ of agents $\alpha \in A$, consider an $A$-indexed family of identity maps $\{ 1_X \}_{\alpha \in A}$ in $\Sets$;
then its final lift in $\Preord$ comes with the epistemic relation for the ``common knowledge'' of the group $A$, i.e.\ $(\bigcup_\alpha R_\alpha)^\ast$.%
\footnote{\strut%
See Section 2.3 of \cite{dit08}, as well as \cite{bal04,bal98}, for common knowledge.
We do not treat its logic in this article.
}

Another consequence, more relevant to this article, is that the forgetful functor $U : \Kr \to \Sets$ to the complete and cocomplete category $\Sets$ lifts limits and colimits---%
meaning that, given any (small) diagram $D$ in $\Kr$, its (co)limit exists on the (co)limit of $U \cmp D$ in $\Sets$.
Most notably,
\begin{enumerate}
\setcounter{enumi}{\value{equation}}
\item\label{itm:Kripke.limit.product}
Given a family of Kripke frames $(Y_i, R_i)$ ($ i \in I$), its product in $\Kr$, $(X, R_X)$, is defined on the cartesian product $X = \prod_{i \in I} Y_i$ by taking $R_X = \bigcap_{i \in I} (\opprel{p_i} \cmp R_i \cmp p_i)$ for the projections $p_i : X \to Y_i$.
\item\label{itm:Kripke.limit.equalizer}
Let $i : S \incto X$ be an inclusion map.
Then $(S, R_S)$ is a subframe of a Kripke frame $(X, R_X)$, i.e.\ $R_S = \opprel{i} \cmp R_X \cmp i$, iff $i$ is a regular mono from $(S, R_S)$ to $(X, R_X)$ in $\Kr$.
\setcounter{equation}{\value{enumi}}
\end{enumerate}
These constructions, and their canonical maps $p_i$ and $i$, are crucial to the semantics of DEL, as we will see in \autoref{sec:del}.
Pullbacks in $\Kr$ will also play a key r\^ole later in \autoref{sec:quantification.fodel}.
In particular, observe

\begin{theorem}\label{thm:pullback.preserve.open}
The pullback of a bounded morphism in $\Kr$ is a bounded morphism.%
\footnote{\strut%
This is a straightforward analogue of the already known fact that, in the category of topological spaces, the pullback of an open map is open.
See Proposition 1 in Section V.4 of \cite{joy84}.
}
\end{theorem}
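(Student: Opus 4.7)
The plan is to combine the explicit description of pullbacks in $\Kr$ afforded by the topological structure (\autoref{thm:Kr.topological}) with the boundedness condition on $f$. Suppose $f : (Y, R_Y) \to (X, R_X)$ is a bounded morphism and $g : (Z, R_Z) \to (X, R_X)$ is an arbitrary arrow in $\Kr$. Since the forgetful functor $U : \Kr \to \Sets$ lifts limits, the pullback sits on $Y \times_X Z = \{\, (y,z) \mid f(y) = g(z) \,\}$ with projections $p, q$, and its relation $R$ is the initial lift of $\{p, q\}$, namely $R = (\opprel{p} \cmp R_Y \cmp p) \cap (\opprel{q} \cmp R_Z \cmp q)$. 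To prove $q$ is bounded I must verify $q \cmp R = R_Z \cmp q$.

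The inclusion $q \cmp R \subseteq R_Z \cmp q$ is automatic from $q$ being a morphism in $\Kr$ (i.e.\ monotone, condition \eqref{itm:continuous.Kripke.3}): from $R \subseteq \opprel{q} \cmp R_Z \cmp q$ together with $q \cmp \opprel{q} \subseteq 1_Z$ (the functionality of $q$), one immediately gets $q \cmp R \subseteq R_Z \cmp q$. The substantive direction is $R_Z \cmp q \subseteq q \cmp R$, which I will prove by element chase. Given $w' = (y', z') \in Y \times_X Z$ and $v \in Z$ with $z' R_Z v$, I must produce $v' \in Y \times_X Z$ with $w' R v'$ and $q(v') = v$. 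Monotonicity of $g$ turns $z' R_Z v$ into $g(z') R_X g(v)$, i.e.\ $f(y') R_X g(v)$ using the pullback equation $f \cmp p = g \cmp q$. This is precisely the hypothesis of the openness clause \eqref{itm:open.Kripke.1} for the bounded morphism $f$, which yields $y \in Y$ with $y' R_Y y$ and $f(y) = g(v)$. Then $v' = (y, v)$ lies in $Y \times_X Z$, satisfies both $p(w') R_Y p(v')$ and $q(w') R_Z q(v')$, hence $w' R v'$, and has $q(v') = v$ as required.

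The one step calling for any real insight is the invocation of \eqref{itm:open.Kripke.1}: boundedness of $f$ is precisely what permits lifting an $R_Z$-step from $Z$ to an $R_Y$-step in $Y$ sitting over the same $R_X$-step in $X$, which is exactly what completes the missing coordinate of the pullback pair $v'$. A more abstract, relation-algebraic rendering would trade $\opprel{g} \cmp f$ for $q \cmp \opprel{p}$ via (the dagger of) the Beck-Chevalley identity of \autoref{thm:beck.chevalley} and manipulate the equation $f \cmp R_Y = R_X \cmp f$ directly, but the element-level version above is already short and makes the role of boundedness transparent.
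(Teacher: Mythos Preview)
Your proof is correct. It is, however, a different presentation from the paper's. The paper gives a purely relation-algebraic derivation: starting from $R_Y \cmp p$, it uses monotonicity of the non-bounded leg (condition \eqref{itm:continuous.Kripke.2}) to intersect with $\opprel{f} \cmp R_X \cmp f \cmp p$, then rewrites using the commuting square, the boundedness equation \eqref{itm:open.Kripke.2} for the bounded leg, and the Beck-Chevalley identity (\autoref{thm:beck.chevalley}), and finally applies the modularity law \eqref{itm:modularity} to push $p$ to the outside, landing on $p \cmp R_{Y \times_X Z}$ via the initial-lift description. Your element chase traces exactly the same logical dependencies (monotonicity of the other leg, the pullback equation, the back condition for the bounded leg, and the explicit form of $R_{Y \times_X Z}$), but pointwise rather than point-free. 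The paper's version better fits its overall theme of exploiting the higher structure of $\Rel$ (and in particular showcases the modularity law, which is otherwise unused), whereas your version is shorter and makes the lifting mechanism explicit; you already note the relation-algebraic alternative in your final paragraph, so you are aware of both routes.
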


\begin{proof}
Let \eqref{eq:beck.chevalley} be a pullback in $\Kr$, let $R_X$ be the relation on $X$, similarly for $Y$, $Z$, and $Y \times_X Z$, and let $g$ be a bounded morphism.
Then $p$ satisfies \eqref{itm:open.Kripke.1} as follows, by
\eqref{itm:continuous.Kripke.2} for $f$;
the commuting of \eqref{eq:beck.chevalley};
\eqref{itm:open.Kripke.2} for $g$;
\autoref{thm:beck.chevalley};
the law of modularity \eqref{itm:modularity};
and
the definition of $R_{Y \times_X Z}$ as the initial lift of $p$ and $q$.
\begin{align*}
R_Y \cmp p
  \subseteq (R_Y \cmp p) \cap (\opprel{f} \cmp R_X \cmp f \cmp p)
& = (R_Y \cmp p) \cap (\opprel{f} \cmp R_X \cmp g \cmp q) \\
& = (R_Y \cmp p) \cap (\opprel{f} \cmp g \cmp R_Z \cmp q) \\
& = (R_Y \cmp p) \cap (p \cmp \opprel{q} \cmp R_Z \cmp q) \\
& \subseteq p \cmp ((\opprel{p} \cmp R_Y \cmp p) \cap (\opprel{q} \cmp R_Z \cmp q))
  = p \cmp R_{Y \times_X Z} .
\qedhere
\end{align*}
\end{proof}

It needs stressing, however, that the canonical maps of ``topological'' constructions in this subsection are not in general bounded morphisms, and hence do not live in $\Krb$.
Indeed, as we will see, they must not be bounded morphisms for DEL to show interesting behaviors.

\section{A Categorical Look at Dynamic Epistemic Logic}\label{sec:del}

This section shows how to use the categorical structure of \autoref{sec:Kripke} to reformulate the standard semantics of dynamic epistemic logic (DEL) structurally.
We will first review the simpler subcase of \emph{public announcement logic} (PAL) in \autoref{sec:del.pal}, and then expand it to the general DEL in \autoref{sec:del.del}.

\subsection{Public Announcement Logic}\label{sec:del.pal}

Regular monos $i$ of $\Kr$ in \eqref{itm:Kripke.limit.equalizer} are used to interpret PAL\@.
This logic has unary operators $\nec{\sigma !}$ and $\pos{\sigma !}$ for all of its propositions $\sigma$.
The proposition $\nec{\sigma !} \varphi$ is intended to mean ``$\varphi$ will be the case after $\sigma$ is publicly and truthfully announced (or observed)'', and interpreted as follows:
Given a Kripke model $(X, R_X, \Scott{-}_X)$ and a subset $S = \Scott{\sigma}_X$ with inclusion $i : S \incto X$, let $(S, R_S, \Scott{-}_S)$ be the submodel on $S$---%
which is defined by $R_S = \opprel{i} \cmp R_X \cmp i$ and $\Scott{p}_S = i^{-1} \Scott{p}_X$ for atomic $p$.
Then
\begin{enumerate}
\setcounter{enumi}{\value{equation}}
\item\label{itm:PAL.semantics.box}
$w \in \Scott{\nec{\sigma !} \varphi}_X$ iff either $w \notin \Scott{\sigma}_X$ or $w \in \Scott{\varphi}_S$ (note the subscripts), i.e., iff $v \in \Scott{\varphi}_S$ for all $v \in S$ such that $v i w$.
In short, $\Scott{\nec{\sigma !} \varphi}_X = \forall_i \Scott{\varphi}_S$.
\setcounter{equation}{\value{enumi}}
\end{enumerate}
Similarly (or De Morgan-dually),
\begin{enumerate}
\setcounter{enumi}{\value{equation}}
\item\label{itm:PAL.semantics.diamond}
$w \in \Scott{\pos{\sigma !} \varphi}_X$ iff both $w \in \Scott{\sigma}_X$ and $w \in \Scott{\varphi}_S$, i.e., iff $v \in \Scott{\varphi}_S$ for some $v \in S$ such that $v i w$.
In short, $\Scott{\pos{\sigma !} \varphi}_X = \exists_i \Scott{\varphi}_S$.
\setcounter{equation}{\value{enumi}}
\end{enumerate}

One may contrast \eqref{itm:PAL.semantics.box} and \eqref{itm:PAL.semantics.diamond} to
\begin{align}
\label{itm:PAL.semantics.pre}
\forall_i \cmp i^{-1} \Scott{\varphi}_X & = \Scott{\sigma \mimp \varphi}_X , &
\exists_i \cmp i^{-1} \Scott{\varphi}_X & = \Scott{\sigma \wedge \varphi}_X .
\end{align}
So, although generally $\Scott{\varphi}_S \neq i^{-1} \Scott{\varphi}_X$, for atomic $p$ we have $\Scott{p}_S = i^{-1} \Scott{p}_X$ by definition, and hence have a ``reduction axiom'' $\nec{\sigma !} p \equiv (\sigma \mimp p)$ by
\begin{gather*}
\Scott{\nec{\sigma !} p}_X = \forall_i \Scott{p}_S = \forall_i \cmp i^{-1} \Scott{p}_X = \Scott{\sigma \mimp p}_X .
\end{gather*}
Reduction axioms, taken together for atomic sentences and for all the ``static'' connectives, completely axiomatize PAL by reducing it to the static modal logic.
Proofs for reduction axioms for connectives are:
\begin{enumerate}
\setcounter{enumi}{\value{equation}}
\item\label{itm:PAL.reduction.conj}
Because $\forall_i$ preserves meets,
\begin{align*}
\Scott{\nec{\sigma !}(\varphi \wedge \psi)}_X
& = \forall_i(\Scott{\varphi}_S \cap \Scott{\psi}_S)
  = \forall_i \Scott{\varphi}_S \cap \forall_i \Scott{\psi}_S
  = \Scott{\nec{\sigma !} \varphi \wedge \nec{\sigma !} \psi}_X .
\end{align*}
\item\label{itm:PAL.reduction.neg}
A CABA homomorphism, $i^{-1}$ preserves $\lnot$.
And $\opprel{i} \cmp i = 1_S$, or dually $i^{-1} \cmp \forall_i = 1_{\pw(S)}$, since $i$ is an injection.
Therefore $\lnot_S = \lnot_S \cmp i^{-1} \cmp \forall_i = i^{-1} \cmp \lnot_X \cmp \forall_i$.
Hence
\begin{align*}
\Scott{\nec{\sigma !} \lnot \varphi}_X
& = \forall_i \cmp \lnot_S \Scott{\varphi}_S
  = \forall_i \cmp i^{-1} \cmp \lnot_X \cmp \forall_i \Scott{\varphi}_S
  = \Scott{\sigma \mimp \lnot \nec{\sigma !} \varphi}_X .
\end{align*}
\item\label{itm:PAL.reduction.box}
$R_S = \opprel{i} \cmp R_X \cmp i$ dually means $\forall_\opprel{R_S} = i^{-1} \cmp \forall_\opprel{R_X} \cmp \forall_i$.
Therefore
\begin{align*}
\Scott{\nec{\sigma !} \Box \varphi}_X
& = \forall_i \cmp \forall_\opprel{R_S} \Scott{\varphi}_S
  = \forall_i \cmp i^{-1} \cmp \forall_\opprel{R_X} \cmp \forall_i \Scott{\varphi}_S
  = \Scott{\sigma \mimp \Box \nec{\sigma !} \varphi}_X .
\end{align*}
\setcounter{equation}{\value{enumi}}
\end{enumerate}
These algebraic proofs are straightforward applications of properties of the duality $\forall_\opprel{-}$.
In particular, it should be noted that \eqref{itm:PAL.reduction.box}, the reduction via $\Box$, is simply a dual to the equality of relations $R_S \cmp \opprel{i} = \opprel{i} \cmp R_X \cmp i \cmp \opprel{i}$.

A perspective on \eqref{itm:PAL.semantics.box}--\eqref{itm:PAL.semantics.diamond} that has been guiding the study of the semantics of PAL, and indeed of DEL (see e.g.\ \cite{bal04}), is that $\nec{\sigma !}$ and $\pos{\sigma !}$ are interpreted by $\forall_i$ and $\exists_i$, and therefore are the modal operators of the relation $\opprel{i}$ (called a ``transition relation'' in \cite{bal04}), similarly to $\Box$ and $\Diamond$ interpreted by $\forall_\opprel{R}$ and $\exists_\opprel{R}$ of $R$ as in \eqref{eq:Kripke.semantics.modal}.
One difference is that, whereas $R$ is a relation on the same set, $\opprel{i}$ is between different sets.
Thus PAL, and DEL in general, generalize Kripke semantics by using relations $R : X \relto Y$ between different Kripke frames to interprete modal operators.
In studying this general setting, it proves helpful to use the relation-modality dualities of \autoref{sec:Kripke.duality} (and not just the derivative dualities of \autoref{sec:Kripke.frames} between Kripke frames and CABAOs).
It may also be interesting to note that $\sigma \mimp {-}$ and $\sigma \wedge {-}$ in \eqref{itm:PAL.semantics.pre}, which play an essential r\^ole in reduction axioms, are modal operators, too, viz.\ those of the relation $i \cmp \opprel{i} : X \relto X$.
This is the reason the relation-modality duality $\forall_\opprel{-}$ is applicable in \eqref{itm:PAL.reduction.box}.

A point of caution here for our categorical approach is that, in general, $R : X \relto Y$ is neither a structure on a Kripke frame (an object of the category $\Kr$) nor a monotone map (an arrow of $\Kr$).
So, to accommodate it in terms of $\Kr$, we use the idea of tabulation from \autoref{sec:Kripke.relations}:
A relation $R : X \relto Y$ corresponds to the pair of projections $r_1 : R \to X$ and $r_2 : R \to Y$ from the set $R \subseteq X \times Y$, so that $R = r_2 \cmp \opprel{r_1}$.
Indeed, given Kripke frames on $X$ and $Y$, \autoref{thm:Kr.topological} gives a canonical Kripke frame on $R \subseteq X \times Y$ from which $r_1$ and $r_2$ are monotone.
Then $\forall_\opprel{R} = \forall_{r_1} \cmp {r_2}^{-1}$ and $\exists_\opprel{R} = \exists_{r_1} \cmp {r_2}^{-1}$;
hence $\forall_\opprel{R}$ and $\exists_\opprel{R}$ of all relations $R$ can be obtained by $\exists_f \dashv f^{-1} \dashv \forall_f$ of monotone maps $f$.
This trick, using monotone maps $r_1$ and $r_2$ of $\Kr$, always works for any relation $R \subseteq X \times Y$.
On the other hand, bounded morphisms of $\Krb$ do not always work, since $r_1$ and $r_2$ are both bounded morphisms if and only if $R$ is a bisimulation.
(We will see an even more crucial r\^ole of $\Kr$ at the end of \autoref{sec:del.del}.)

\subsection{Dynamic Epistemic Logic}\label{sec:del.del}

Let us now consider the Baltag-Moss-Solecki semantics of DEL \cite{bal98} and observe how product update in it can be treated categorically.
Take two Kripke frames, $(X, R_X)$ and $(E, R_E)$, and regard the former as an ``epistemic model'' and the latter as an ``event model''.
So, let us assume that $(X, R_X)$ is equipped with an interpretation $\Scott{\Pre(e)}_X \subseteq X$ of the precondition $\Pre(e)$ of every event $e \in E$ (or we can take a Kripke model $(X, R_X, \Scott{-}_X)$ on $(X, R_X)$);
we write $i_e : \Scott{\Pre(e)}_X \incto X$ for the inclusion maps.
Then the product model of the two frames, obtained by ``updating'' $(X, R_X)$ with $(E, R_E)$, is defined on the disjoint union of $i_e$, i.e.\ the subset
\begin{gather}
\label{eq:del.product.update}
X \otimes E = \sum_{e \in E} \Scott{\Pre(e)}_X = \{\, (w, e) \in X \times E \mid w \in \Scott{\Pre(e)}_X \,\}
\quad\text{of}\quad
X \times E = \sum_{e \in E} X .
\end{gather}
The ``epistemic'' relation $R_{X \otimes E}$ on $X \otimes E$ is defined as the subframe of the product $(X \times E, R_{X \times E})$ of $(X, R_X)$ and $(E, R_E)$, using \eqref{itm:Kripke.limit.product} and \eqref{itm:Kripke.limit.equalizer}.
This amounts to
\begin{gather*}
(w_1, e_1) R_{X \otimes E} (w_2, e_2) \iff w_1 R_X w_2 \text{ and } e_1 R_E e_2 ,
\quad\text{i.e.,}\quad
R_{X \otimes E} = (\opprel{p_X} \cmp R_X \cmp p_X) \cap (\opprel{p_E} \cmp R_E \cmp p_E)
\end{gather*}
for the projections $p_X : X \otimes E \to X :: (w, e) \mapsto w$ and $p_E : X \otimes E \to E :: (w, e) \mapsto e$.
In short, it is the initial lift of $p_X$ and $p_E$.
In addition, given a Kripke model $\Scott{-}_X$ on $X$, it induces an updated Kripke model on $X \otimes E$ by $\Scott{p}_{X \otimes E} = {p_X}^{-1} \Scott{p}_X$ for atomic $p$.%
\footnote{\strut%
This is the case without ``factual change''.
A version with factual change \cite{ben06} can also be treated categorically.
}

Let us analyze this construction a bit further, using the following diagram (for each $e \in E$).
\begin{gather}
\label{eq:del.product.update.diagram}
\begin{gathered}
\begin{tikzpicture}[x=20pt,y=20pt]
\coordinate (O) at (0,0);
\coordinate (r1) at (5.75,0);
\coordinate (r2) at (3.75,0);
\coordinate (d) at (0,-2.5);
\node (A0) [inner sep=0.25em] at (O) {$\sum_{e' \in E} \Scott{\Pre(e')}_X$};
\node (A0L) [anchor=east,inner sep=0.25em] at ($ (A0.west) + (0.5em,0) $) {$X \otimes E = {}$};
\node (A1) [inner sep=0.25em] at ($ (A0) + (r1) $) {$\sum_{e' \in E} X = X \times E$};
\node (A2) [inner sep=0.25em] at ($ (A1) + (r2) $) {$E$};
\node (B0) [inner sep=0.25em] at ($ (A0) + (d) $) {$\Scott{\Pre(e)}_X$};
\node (B1) [inner sep=0.25em] at ($ (B0) + (r1) $) {$X$};
\draw [right hook->] (A0) -- (A1) node [pos=0.5,inner sep=2pt,above] {$i$};
\draw [->] (A1) -- (A2) node [pos=0.5,inner sep=2pt,above] {$p'_E$};
\draw [right hook->] (B0) -- (B1) node [pos=0.5,inner sep=2pt,below] {$i_e$};
\draw [>->] (B0) -- (A0) node [pos=0.5,inner sep=2pt,left] {$q_e$};
\draw [transform canvas={xshift=-5pt},>->] (B1) -- (A1) node [pos=0.5,inner sep=2pt,left] {$q'_e$};
\draw [transform canvas={xshift=5pt},->] (A1) -- (B1) node [pos=0.5,inner sep=2pt,right] {$p'_X$};
\coordinate (B0-pb) at ($ (B0) + (1,1) $);
\draw ($ (B0-pb) + (-0.45,0) $) -- (B0-pb) -- ($ (B0-pb) + (0,-0.45) $);
\end{tikzpicture}
\end{gathered}
\end{gather}
Here $p'_X$ and $p'_E$ are the obvious projections, so that $p_X = p'_X \cmp i$ and $p_E = p'_E \cmp i$.
And $q_e$ and $q'_e$ are the ``coproduct injections'' $w \mapsto (w, e)$.
The inclusion $i : X \otimes E \incto X \times E$ has $i \cmp q_e = q'_e \cmp i_e$ (by its definition as $i = \sum_{e' \in E} i_{e'}$), while $p'_X \cmp q'_e = 1_X$ (since $p'_X$ equals the trivial ``cotuple'' $[1_X]_{e' \in E}$), and therefore $p_X \cmp q_e = p'_X \cmp i \cmp q_e = p'_X \cmp q'_e \cmp i_e = i_e$.

Given this construction, for each $e \in E$ the canonical functions $i_e$ and $q_e$ tabulate a relation $R_e = q_e \cmp \opprel{i_e} : X \relto X \otimes E$;
i.e., $w R_e (w', e')$ iff $w = w' \in \Scott{\Pre(e)}_X$ and $e = e'$, or $w R_e v$ iff $v p_X w$ and $v p_E e$.
\label{page:beck.chevalley.product.update}%
\autoref{thm:beck.chevalley} implies $R_e = q_e \cmp \opprel{i_e} = \opprel{i} \cmp q'_e$ since the square in \eqref{eq:del.product.update.diagram} is a pullback.
This relation, and its duals $\forall_\opprel{R_e} = \forall_{i_e} \cmp {q_e}^{-1}$ and $\exists_\opprel{R_e} = \exists_{i_e} \cmp {q_e}^{-1}$, are then used to interpret the dynamic operators $\nec{E, e}$ and $\pos{E, e}$;
the proposition $\nec{E, e} \varphi$ is supposed to mean ``$\varphi$ will be the case after the event $e$ takes place''.
The interpretation, similar to \eqref{itm:PAL.semantics.box}--\eqref{itm:PAL.semantics.diamond}, is as follows:
\begin{align}
\label{itm:DEL.semantics.modal}
\Scott{\nec{E, e} \varphi}_X
& = \forall_\opprel{R_e} \Scott{\varphi}_{X \otimes E} , &
\Scott{\pos{E, e} \varphi}_X
& = \exists_\opprel{R_e} \Scott{\varphi}_{X \otimes E} .
\end{align}

As in \eqref{itm:PAL.semantics.pre}, relations $p_X \cmp R_e = p_X \cmp q_e \cmp \opprel{i_e} = i_e \cmp \opprel{i_e}$ give
\begin{align}
\label{itm:DEL.semantics.pre}
\forall_{i_e} \cmp {i_e}^{-1} \Scott{\varphi}_X & = \Scott{\Pre(e) \mimp \varphi}_X , &
\exists_{i_e} \cmp {i_e}^{-1} \Scott{\varphi}_X & = \Scott{\Pre(e) \wedge \varphi}_X ,
\end{align}
which we may call ``static precondition modalities'', as the modal operators of $i_e \cmp \opprel{i_e}$.
Then the reduction axioms of DEL can be proven as follows.
(The reduction via $\wedge$ goes since $\forall_\opprel{R_e}$ preserves meets, just the same way as in \eqref{itm:PAL.reduction.conj};
the case of $\lnot$ is similar to \eqref{itm:PAL.reduction.neg}, albeit more complicated.)
\begin{enumerate}
\setcounter{enumi}{\value{equation}}
\item\label{itm:DEL.reduction.atom}
$p_X \cmp R_e = i_e \cmp \opprel{i_e}$ implies the following for atomic $p$, by \eqref{itm:DEL.semantics.pre} and $\Scott{p}_{X \otimes E} = {p_X}^{-1} \Scott{p}_X$.
\begin{align*}
\Scott{\nec{E, e} p}_X
& = \forall_\opprel{R_e} \Scott{p}_{X \otimes E}
  = \forall_\opprel{R_e} \cmp {p_X}^{-1} \Scott{p}_X
  = \forall_{i_e} \cmp {i_e}^{-1} \Scott{p}_X
  = \Scott{\Pre(e) \mimp p}_X .
\end{align*}
\item\label{itm:DEL.reduction.box}
For the case of $\Box$, first note that $w R_e v p_E e'$ implies $e = e'$ since $w R_e v$ implies $v p_E e$ whereas $p_E$ is a function.
In other words, $w R_e v p_E e'$ iff $w R_e v$ and $e' = e$.
This entails $(\ast)$ in the following:
\begin{align*}
w R_e ; R_{X \otimes E} v
& \iff w R_e v' p_X ; R_X ; \opprel{p_X} v \text{ and } w R_e v' p_E ; R_E ; \opprel{p_E} v \text{ for some } v ' \in X \otimes E \\
& \stackrel{(\ast)}{\iff} w R_e ; p_X ; R_X ; \opprel{p_X} v \text{ and } e R_E ; \opprel{p_E} v \\
& \stackrel{(\dagger)}{\iff} w R_e ; p_X ; R_X ; R_{e'} v \text{ for some } e' \in E \text{ such that } e R_E e' ,
\end{align*}
where $(\dagger)$ holds since $u R_{e'} v$ iff $v p_X u$ and $v p_E e'$, i.e.\ iff $u \opprel{p_X} v$ and $e' \opprel{p_E} v$.
Thus,
\begin{gather*}
R_{X \otimes E} \cmp R_e = (\bigcup_{e R_E e'} R_{e'}) \cmp R_X \cmp p_X \cmp R_e = (\bigcup_{e R_E e'} R_{e'}) \cmp R_X \cmp i_e \cmp \opprel{i_e} .
\end{gather*}
Observe on the other hand that, for a family of relations $R_i : X \relto Y$ of the same type, we have $\forall_\opprel{(\bigcup_i R_i)} = \bigcap_i \cmp \forall_\opprel{R_i}$.
Therefore
\begin{align*}
\Scott{\nec{E, e} \Box \varphi}_X
& = \forall_\opprel{R_e} \cmp \forall_\opprel{R_{X \otimes E}} \Scott{\varphi}_{X \otimes E} \\
& = \forall_{i_e} \cmp {i_e}^{-1} \cmp \forall_\opprel{R_X} \cmp \bigcap_{e R_E e'} \cmp \forall_\opprel{R_{e'}} \Scott{\varphi}_{X \otimes E}
  = \Scott{\Pre(e) \mimp \Box \bigwedge_{e R_E e'} [E, e'] \varphi} {\vphantom\varphi}_X .
\end{align*}
\setcounter{equation}{\value{enumi}}
\end{enumerate}

We conclude this section with a remark on the significance of using the category $\Kr$.
We reviewed in this section that topological constructions (\autoref{sec:Kripke.topological}) and their canonical maps play essential r\^oles in the semantics of PAL and DEL\@.
These constructions take place in $\Kr$ as opposed to the category $\Krb$, and the canonical maps are monotone maps of $\Kr$, and not bounded morphisms of $\Krb$.
Indeed, for DEL to show interesting behaviors, the canonical maps---%
in particular, $p_X : X \otimes E \to X$, which amounts to $i : \Scott{\sigma}_X \incto X$ in the case of PAL---%
must not be bounded moprphisms.
For, if $p_X$ is a bounded morphism, then $\Scott{\varphi}_{X \otimes E} = {p_X}^{-1} \Scott{\varphi}_X$ for every $\varphi$ and not just atomic $p$ (this entails $\nec{E, e} \varphi \equiv \Pre(e) \mimp \varphi$ the same way as in \eqref{itm:DEL.reduction.atom})---%
this means that no event can teach agents anything.
In other words, for events to teach agents something, they must bring about some change logically, and therefore the maps $f$ representing them must not have logic-preserving duals $f^{-1}$.

\section{Application:\ Quantification}\label{sec:quantification}

This section demonstrates a virtue of our categorical perspective, by showing how to extend DEL to the first order.
Our structural approach to DEL and the standard structural approach to first-order logic can be integrated together, simply as two modules, using the methodology of category theory.
We will first review how to interpet classical first-order logic in \autoref{sec:quantification.classical}, and how to add this first-order structure to Kripke semantics using ``Kripke sheaves'' in \autoref{sec:quantification.kripke}.
We will then equip this semantics with a DEL-type update in \autoref{sec:quantification.fodel}, obtaining a new sheaf semantics for first-order DEL\@.

\subsection{Classical Semantics in a Slice Category}\label{sec:quantification.classical}

Here we review how the standard semantics for classical first-order logic goes in the category $\Sets / X$, as the non-modal basis of semantics in \autoref{sec:quantification.kripke}.
See \cite{pit00} for a more general and extensive account.

Let us first recall the definition of \emph{slice category}.
Given any category $\C$, fix any object $C$.
Then the slice category $\C / C$, ``$\C$ over $C$'', consists of the following:
\begin{itemize}
\item
Objects are any arrow $f : D \to C$ of $\C$ with the codomain $C$.
\item
Arrows from $f : D \to C$ to $g : E \to C$ are any arrow $h : D \to E$ of $C$ such that $g \cmp h = f$.
\end{itemize}
In particular, given a set $X$, $\Sets / X$ is the category of ``sets and functions over $X$'':
\begin{itemize}
\item
Objects, ``sets over $X$'', are functions $\pi : D \to X$.
For each $w \in X$ we write $D_w$ for the inverse image $\pi^{-1}(\{ w \})$, called the ``fiber over $w$''.
\item
And arrows from $\pi_1 : D \to X$ to $\pi_2 : E \to X$ are functions $f : D \to E$ ``over $X$'', meaning that $\pi_2 \cmp f = \pi_1$, or equivalently that if $a \in D_w$ then $f(a) \in E_w$ for the same $w$.
\end{itemize}
We will also later consider a Kripke-structured version of $\Sets / X$, viz.\ $\Kr / (X, R)$ over a Kripke frame $(X, R)$:
Its objects and arrows are monotone maps and not just any functions.

Fixing any (nonempty) set $X$, the slice category $\Sets / X$ is used to interpret classical first-order logic as follows.
We fix an object $\pi : D \to X$ of $\Sets / X$, and a surjection $\pi$ in particular.
We then regard $X$ as a set of worlds and $D$ as a set of individuals.
Each individual $a \in D$ is assumed to live in a unique world, viz.\ $\pi(a) \in X$.
In this sense we may call $\pi$ a ``residence map''.
For each world $w \in X$, the fiber $D_w = \pi(\{ w \})$ is the set of individuals living in $w$.
In fact, for each $n \in \NN$, the cartesian product $D^n_w = D_w \times \cdots \times D_w$ is the set of $n$-tuples of individuals living in $w$, and the disjoint union of $D^n_w$ for all $w \in X$, i.e.\ the $n$-fold ``fibered product'' of $D$ over $X$,
\begin{gather*}
D^n_X = \sum_{w \in X} D^n_w = \{\, (a_1, \ldots, a_n) \in D \times \cdots \times D \mid \pi(a_1) = \cdots = \pi(a_n) \,\} ,
\end{gather*}
is the set of $n$-tuples from the same world, with the projection
\begin{gather*}
\pi^n : D^n_X \to X :: \bar{a} \to \pi(a_i)
\end{gather*}
mapping an $n$-tuple from the same world to that world.
(As special cases, $D^1_X = D$ and $D^0_X = X$, with $\pi^1 = \pi : D \to X$ and $\pi^0 = 1_X : D \to D$.)
Categorically speaking, this is to take the $n$-fold pullback of $D$ over $X$ in $\Sets$, or the $n$-fold product of $\pi$ in $\Sets / X$.

One important note regarding the semantics in $\Sets / X$ is that it interprets ``formulas in contexts''.
A context is a (finite) sequence of variables that are all distinct.
A formula $\varphi$ can be in a context $(x_1, \ldots, x_n)$ if no other variables occur freely in $\varphi$.
It is not assumed that all of $x_1$, \ldots, $x_n$ actually occur freely in $\varphi$;
so, e.g., if $\varphi$ can be in a context $(x_1, \ldots, x_n)$ then it can also be in $(x_1, \ldots, x_n, x_{n + 1}, \ldots x_m)$.
A formula-in-context is a pair of formula and a context it can be in;
so, writing $(\, x_1, \ldots, x_n \mid \varphi \,)$ presupposes that $\varphi$ can be in $(x_1, \ldots, x_n)$.
Now, we semantically interpret formulas-in-contexts $(\, x_1, \ldots, x_n \mid \varphi \,)$ rather than formulas $\varphi$:
We regard $(\, x_1, \ldots, x_n \mid \varphi \,)$ as an $n$-ary predicate that may or may not be true of $n$-tuples of individuals $(a_1, \ldots, a_n)$.
Similarly, we interpret terms-in-contexts $(\, x_1, \ldots, x_n \mid t \,)$ as mappings of $n$-tuples of individuals to individuals.
We will write $\bar{x}$ and $\bar{a}$ for sequences $(x_1, \ldots, x_n)$ and $(a_1, \ldots, a_n)$.

In propositional logic, we interpret a sentence $\sigma$ with $\Scott{\sigma} \subseteq X$, so that $w \in \Scott{\sigma}$ means that $\sigma$ is true at $w$.
Similarly, in the semantics in $\Sets / X$, we interpret a closed sentence $\sigma$ in the empty context with $\Scott{\sigma} \subseteq X$.
Yet, extending this, we interpret an $n$-ary formula-in-context $(\, \bar{x} \mid \varphi \,)$ with $\Scott{\, \bar{x} \mid \varphi \,} \subseteq D^n_X$, so that $\bar{a} \in \Scott{\, \bar{x} \mid \varphi \,}$ means that $\varphi$ is true of individuals $a_1$, \ldots, $a_n$ in place of $x_1$, \ldots, $x_n$ (at the world $\pi(a_i)$).
The same formula $\varphi$ in different contexts is true of different tuples:
E.g.\ $(a, b) \in \Scott{\, x, y \mid \varphi \,}$ iff $(b, a) \in \Scott{\, y, x \mid \varphi \,}$ iff $(a, b, c) \in \Scott{\, x, y, z \mid \varphi \,}$ (for any $c \in D$ such that $(a, b, c) \in D^3_X$).

An interpretation $\Scott{-}$ can be defined inductively, first for terms and then for formulas.
In interpreting terms in $\Sets / X$, the core idea is to interpret an $n$-ary term-in-context $(\, \bar{x} \mid t \,)$ with an arrow $\Scott{\, \bar{x} \mid t \,} : D^n_X \to D$ in $\Sets / X$, i.e.\ a function sending $\bar{a} \in D^n_w$ to $\Scott{\, \bar{x} \mid t \,}(\bar{a}) \in D_w$.
To each $n$-ary function symbol $f$, assign an arrow $\Scott{f} : D^n_X \to D$ of $\Sets / X$.
(This includes $\Scott{c} : X \to D$ for a constant, i.e.\ $0$-ary function symbol.)
Then, for the base case let $\Scott{\, \bar{x} \mid f \bar{x} \,} = \Scott{f}$, whereas we also let $\Scott{\, \bar{x} \mid x_i \,} = p_i : D^n_X \to D :: \bar{a} \mapsto a_i$ for each $i \leqslant n$.
For inductive steps, define the substitution of terms as follows:
Given a term-in-context $(\, x_1, \ldots, x_n \mid t \,)$ and terms $t_1, \ldots, t_n$, we write $t [t_1, \ldots, t_n / x_1, \ldots, x_n]$ for the result of substituting $t_i$ for all the free occurrences of $x_i$ in $t$.
Then, given $\Scott{\, \bar{x} \mid t \,}$ and $\Scott{\, \bar{y} \mid t_i \,}$ for each $i \leqslant n$ where $\bar{y} = (y_1, \ldots, y_m)$, write
\begin{gather*}
\Scott{\, \bar{y} \mid \bar{t} \,} = \langle \Scott{\, \bar{y} \mid t_1 \,}, \ldots, \Scott{\, \bar{y} \mid t_n \,} \rangle : D^m_X \to D^n_X :: \bar{b} \mapsto (\Scott{\, \bar{y} \mid t_1 \,}(\bar{b}) , \ldots, \Scott{\, \bar{y} \mid t_n \,}(\bar{b})) ,
\end{gather*}
and we have
\begin{gather}
\label{eq:FOML.semantics.term}
\Scott{\, \bar{y} \mid t [t_1, \ldots, t_n / x_1, \ldots, x_n] \,} = \Scott{\, \bar{x} \mid t \,} \cmp \Scott{\, \bar{y} \mid \bar{t} \,} .
\end{gather}

Now, to each $n$-ary relation symbol $F$, assign any subset $\Scott{F} \subseteq D^n_X$, and $\Scott{\, \bar{x} \mid F \bar{x} \,} = \Scott{F}$.
Inductively,
\begin{align}
\label{eq:FOML.semantics.boolean}
\Scott{\, \bar{x} \mid \varphi \wedge \psi \,} & = \Scott{\, \bar{x} \mid \varphi \,} \cap \Scott{\, \bar{x} \mid \psi \,} , &
\Scott{\, \bar{x} \mid \lnot \varphi \,} & = \lnot \Scott{\, \bar{x} \mid \varphi \,} = D^n_X \setminus \Scott{\, \bar{x} \mid \varphi \,}
\end{align}
for Boolean operators.
For quantifiers, take a projection $p : D^{n + 1}_X \to D^n_X :: (\bar{a}, b) \mapsto \bar{a}$ and let
\begin{align}
\label{eq:FOML.semantics.quantify}
\Scott{\, \bar{x} \mid \forall y \ldot \varphi \,} & = \forall_p \Scott{\, \bar{x}, y \mid \varphi \,} , &
\Scott{\, \bar{x} \mid \exists y \ldot \varphi \,} & = \exists_p \Scott{\, \bar{x}, y \mid \varphi \,} ;
\end{align}
the case of $n = 0$ is just $p = \pi : D \to X$.
Closely connected to quantification is the substitution of terms:
Write $\varphi [\bar{t} / \bar{x}]$ for the result of substituting $t_i$ for $x_i$ in $\varphi$ (this makes sense only if $t$ is free for $x$ in $\varphi$).
Then the substitution satisfies
\begin{gather}
\label{eq:FOML.semantics.substitution}
\Scott{\, \bar{y} \mid \varphi [t_1, \ldots, t_n / x_1, \ldots, x_n] \,} = \Scott{\, \bar{y} \mid \bar{t} \,}^{-1} \Scott{\, \bar{x} \mid \varphi \,} .
\end{gather}
As an instance of this, given $\Scott{\, \bar{x} \mid \varphi \,}$ we can add a vacuous variable to the context by
\begin{gather}
\label{eq:FOML.semantics.vacuous}
\Scott{\, \bar{x}, y \mid \varphi \,} = p^{-1} \Scott{\, \bar{x} \mid \varphi \,}
\end{gather}
for the same $p :: (\bar{a}, b) \mapsto \bar{a}$ as above;
and other operations on contexts (e.g.\ permutation) can be interpreted in similarly obvious ways.

\subsection{Kripke-Sheaf Semantics}\label{sec:quantification.kripke}

In this subsection we review ``Kripke-sheaf semantics'' for first-order modal logic.
An extensive exposition of this semantics is in \cite{gab09}.
We use the notation and terminology from \cite{kis11}, however, to be consistent with \autoref{sec:quantification.classical}.%
\footnote{\strut%
\cite{kis11} provides a more general semantics using neighborhood structure, but Kripke-sheaf semantics is simply a special case involving Kripke frames;
see Section 3 of \cite{kis11}, in particular.
It should be noted that the definitions of Kripke sheaf in \cite{gab09} (Definition 3.6.2) and in \cite{kis11} (Definition 3.5) only agree for the limited case of reflexive and transitive Kripke frames.
\autoref{def:Kripke.sheaf} in the following is the version in \cite{kis11}.
}

As to syntax, we take a first-order language---%
with relation symbols, variables, function symbols and constants---%
and add $\Box$ and $\Diamond$ to it as unary operators that behave just the same way $\lnot$ does.
By this we mean in particular that $\Box (\varphi [t / x])$ (i.e.\ first substituting $t$ and then applying $\Box$) and $(\Box \varphi) [t / x]$ (first applying $\Box$ and then substituting $t$) are the same formula, just the same way $\lnot (\varphi [t / x])$ and $(\lnot \varphi) [t / x]$ are.

Now, enter

\begin{definition}\label{def:Kripke.sheaf}
A bounded morphism $\pi : (D, R_D) \to (X, R_X)$ is called a \emph{Kripke sheaf} over $(X, R_X)$ if
\begin{enumerate}
\setcounter{enumi}{\value{equation}}
\item\label{itm:sheaf.Kripke.1}
$a R_D b \pi w$ and $a R_D b' \pi w$ imply $b = b'$.
That is, $(R_D \cmp \opprel{R_D}) \cap (\opprel{\pi} \cmp \pi) \subseteq 1_D$.
\setcounter{equation}{\value{enumi}}
\end{enumerate}
\end{definition}

We fix one such map and, as we did in \autoref{sec:quantification.classical}, regard it as a residence map from the individuals $D$ to the worlds $X$.
Then, for each $n \in \NN$, the set $D^n_X$ of $n$-tuples from the same world comes with the ``epistemic'' relation $R_{D^n_X}$ by \autoref{thm:Kr.topological} or by \eqref{itm:Kripke.limit.product} and \eqref{itm:Kripke.limit.equalizer}.
Categorically, this is to take the $n$-fold pullback of $(D, R_D)$ over $(X, R_X)$ in $\Kr$, or equivalently the $n$-fold product of $\pi$ in the slice category $\Kr / (X, R_X)$.

We interpret first-order modal logic with $\pi$ and other structure in $\Kr / (X, R_X)$.
The classical base of the logic is interpreted with the underlying, non-Kripke structure in $\Sets / X$, just as in \autoref{sec:quantification.classical}.
The new, modal part is then added to the base using the Kripke structure, as follows:
First we require that, for each $n$-ary function symbol $f$, its interpretation $\Scott{f} : (D^n_X, R_{D^n_X}) \to (D, R_D)$ be monotone, so that all interpretations $\Scott{\, \bar{y} \mid t \,}$ of terms are monotone---%
i.e., they must be arrows of $\Kr / (X, R_X)$.
Then we set
\begin{align}
\label{eq:FOML.semantics.modal}
\Scott{\, \bar{x} \mid \Box \varphi \,} & = \forall_\opprel{R_{D^n_X}} \Scott{\, \bar{x} \mid \varphi \,} , &
\Scott{\, \bar{x} \mid \Diamond \varphi \,} & = \exists_\opprel{R_{D^n_X}} \Scott{\, \bar{x} \mid \varphi \,} .
\end{align}
In this way, we adopt the following ideas for the semantics.
\begin{enumerate}
\setcounter{enumi}{\value{equation}}
\item\label{itm:FOML.semantics.idea.1}
We use a family of Kripke models $(D^n_X, R_{D^n_X}, \Scott{-})$, where each $(D^n_X, R_{D^n_X})$ is the $n$-fold product of $\pi : (D, R_D) \to (X, R_X)$ in $\Kr / (X, R_X)$.
\item\label{itm:FOML.semantics.idea.2}
Each dual $(\pw(D^n_X), \forall_\opprel{R_{D^n_X}}, \exists_\opprel{R_{D^n_X}})$ is a CABAO of $n$-ary properties governed by \eqref{eq:FOML.semantics.boolean} and \eqref{eq:FOML.semantics.modal}.
\item\label{itm:FOML.semantics.idea.3}
We interpret terms with arrows $f$ of $\Kr / (X, R_X)$ between products $(D^n_X, R_{D^n_X})$.
\item\label{itm:FOML.semantics.idea.4}
The CABAOs interact with one another via cross-context operations, which are interpreted, as in \eqref{eq:FOML.semantics.quantify}--\eqref{eq:FOML.semantics.substitution}, with $\exists_f \dashv f^{-1} \dashv \forall_f$ of arrows $f$ of $\Kr / (X, R_X)$.
\setcounter{equation}{\value{enumi}}
\end{enumerate}
So, let us enter

\begin{definition}\label{def:Kripke.sheaf.model}
By a \emph{Kripke-sheaf model} we mean a Kripke sheaf $\pi : (D, R_D) \to (X, R_X)$ paired with a family of maps $\Scott{-}$ that assigns
\begin{itemize}
\item
an arrow $\Scott{f} : D^n_X \to D$ of $\Kr / (X, R_X)$ to each $n$-ary function symbol $f$,
\item
$\Scott{\, \bar{y} \mid t \,} : D^n_X \to D$ to all terms-in-contexts $(\, \bar{y} \mid t \,)$ by $\Scott{\, \bar{x} \mid f \bar{x} \,} = \Scott{f}$, $\Scott{\, \bar{x} \mid x_i \,} :: \bar{a} \mapsto a_i$, and \eqref{eq:FOML.semantics.term},
\item
any subset $\Scott{F} \subseteq D^n_X$ to each $n$-ary relation symbol $F$, and
\item
$\Scott{\, \bar{x} \mid \varphi \,} \subseteq D^n_X$ to all formulas-in-contexts $(\, \bar{x} \mid \varphi \,)$ by $\Scott{\, \bar{x} \mid F \bar{x} \,} = \Scott{F}$, \eqref{eq:FOML.semantics.boolean}--\eqref{eq:FOML.semantics.vacuous} and \eqref{eq:FOML.semantics.modal}.
\end{itemize}
\end{definition}

\autoref{def:Kripke.sheaf.model} requires $\pi$ to be not just a monotone map but moreover a Kripke sheaf, whereas no Kripke sheaves are mentioned in the ideas \eqref{itm:FOML.semantics.idea.1}--\eqref{itm:FOML.semantics.idea.4}.
The requirement is needed, however, precisely in order for the interaction \eqref{itm:FOML.semantics.idea.4} to behave coherently.
Given any $\Scott{\sigma} \subseteq D$, observe that there are two ways to obtain $\Scott{\, y \mid \Box \sigma \,}$ by applying \eqref{eq:FOML.semantics.vacuous} and \eqref{eq:FOML.semantics.modal}, viz.\
\begin{align*}
\Scott{\, y \mid \Box \sigma \,}
& = \forall_\opprel{R_D} \Scott{\, y \mid \sigma \,}
  = \forall_\opprel{R_D} \cmp \pi^{-1} \Scott{\sigma} , &
\Scott{\, y \mid \Box \sigma \,}
& = \pi^{-1} \Scott{\Box \sigma}
  = \pi^{-1} \cmp \forall_\opprel{R_X} \Scott{\sigma} .
\end{align*}
So the well-definedness of $\Scott{-}$, along with \eqref{eq:FOML.semantics.vacuous} and \eqref{eq:FOML.semantics.modal}, requires that $\forall_\opprel{R_D} \cmp \pi^{-1} = \pi^{-1} \cmp \forall_\opprel{R_X}$, or dually $\pi \cmp R_D = R_X \cmp \pi$, i.e.\ that $\pi$ be a bounded morphism.
Indeed, any map $\Scott{\, \bar{y} \mid \bar{t} \,}$ involved in \eqref{eq:FOML.semantics.substitution} must be a bounded morphism.
Recall that our syntax has $\Box (\varphi [\bar{t} / \bar{x}]) = (\Box \varphi) [\bar{t} / \bar{x}]$.
This means that, for $\Scott{-}$ to be well-defined, we need $\Scott{\, \bar{y} \mid \Box (\varphi [\bar{t} / \bar{x}]) \,} = \Scott{\, \bar{y} \mid (\Box \varphi) [\bar{t} / \bar{x}] \,}$, both sides giving the same interpretation to the same formula $\Box \varphi [\bar{t} / \bar{x}]$.
So, given $\Scott{\, \bar{x} \mid \varphi \,} \subseteq D^n_X$ and
$\Scott{\bar{t}} = \Scott{\, \bar{y} \mid \bar{t} \,} : D^m_X \to D^n_X$,
\eqref{eq:FOML.semantics.substitution} and \eqref{eq:FOML.semantics.modal} imply
\begin{gather*}
\forall_\opprel{R_{D^m_X}} \cmp \Scott{\bar{t}}^{-1} \Scott{\, \bar{x} \mid \varphi \,}
= \Scott{\, \bar{y} \mid \Box (\varphi [\bar{t} / \bar{x}]) \,}
= \Scott{\, \bar{y} \mid (\Box \varphi) [\bar{t} / \bar{x}] \,}
= \Scott{\bar{t}}^{-1} \cmp \forall_\opprel{R_{D^n_X}} \Scott{\, \bar{x} \mid \varphi \,} .
\end{gather*}
Thus, the well-definedness of $\Scott{-}$, along with \eqref{eq:FOML.semantics.substitution} and \eqref{eq:FOML.semantics.modal}, again requires that $\Scott{\bar{t}}$ be a bounded morphism.%
\footnote{\strut%
From a perspective of categorical logic, one often takes \eqref{eq:FOML.semantics.substitution}, for all $\varphi$, as part of the definition of a model, rather than a derived fact about the model.
It is from this perspective that we describe the situation as a matter of well-definedness of the model.
One could also see the same situation as a matter of deriving \eqref{eq:FOML.semantics.substitution} from its atomic case using a property of bounded morphisms;
we acknowledge an anonymous reviewer for this perspective.
One could of course choose to reject \eqref{eq:FOML.semantics.substitution} or \eqref{eq:FOML.semantics.modal}, or even to use a syntax without $\Box (\varphi [t / x]) = (\Box \varphi) [t / x]$.
(These options, needless to say, would make \autoref{thm:FOML.completeness} unavailable to one's semantics.)
A notable case of rejecting \eqref{eq:FOML.semantics.modal} is the counterpart theory in \cite{lew68}, which restricts \eqref{eq:FOML.semantics.modal} to the case where all the variables in $\bar{x}$ actually occur freely in $\varphi$.
}
Yet, all maps involved in \autoref{def:Kripke.sheaf.model} are indeed guaranteed to be bounded morphisms, by

\begin{fact}\label{thm:sheaves.full.subcat}
If $\pi : D \to X$ is a Kripke sheaf, then so is every $\pi^n : D^n_X \to X$.
Moreover, given two Kripke sheaves $\pi_D : D \to X$ and $\pi_E : E \to X$, any monotone map $f : D \to E$ over $X$ (i.e.\ satisfying $\pi_E \cmp f = \pi_D$) is also a Kripke sheaf (and hence a bounded morphism).
On the other hand, $\pi$ is a Kripke sheaf iff both $\pi$ and the ``diagonal map'' $\Delta : D \to D^2_X :: a \mapsto (a, a)$ are bounded morphisms.%
\footnote{\strut%
See Facts 4.2, 4.4, and 4.6 in \cite{kis11}.
}
\end{fact}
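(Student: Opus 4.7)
All three claims hinge on the explicit description of $R_{D^n_X}$ as the initial lift (\autoref{thm:Kr.topological}) of the projections $p_i : D^n_X \to D$, which unpacks to the condition $\bar{a}\, R_{D^n_X}\, \bar{b}$ iff $a_i\, R_D\, b_i$ for every $i \leqslant n$. Under this unpacking each part reduces to a coordinatewise application of the defining sheaf property on the base map.

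For the first claim, to see that $\pi^n : D^n_X \to X$ is a bounded morphism I would start from $\bar{a}$ with $\pi^n(\bar{a}) = w$ and $w\, R_X\, v$, and apply the bounded-morphism property of $\pi$ in each coordinate to extract $b_i$ with $a_i\, R_D\, b_i$ and $\pi(b_i) = v$; the tuple $\bar{b}$ then lies in $D^n_X$, relates to $\bar{a}$ under $R_{D^n_X}$, and maps to $v$. For the sheaf condition, if $\bar{a}\, R_{D^n_X}\, \bar{b}$, $\bar{a}\, R_{D^n_X}\, \bar{b}'$ and $\pi^n(\bar{b}) = \pi^n(\bar{b}')$, then in each coordinate $a_i\, R_D\, b_i$, $a_i\, R_D\, b'_i$ and $\pi(b_i) = \pi(b'_i)$, so the sheaf property of $\pi$ gives $b_i = b'_i$.

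For the second claim, the sheaf condition on $f$ is immediate: if $a\, R_D\, b$, $a\, R_D\, b'$ and $f(b) = f(b')$, then $\pi_D(b) = \pi_E f(b) = \pi_E f(b') = \pi_D(b')$, and the sheaf property of $\pi_D$ forces $b = b'$. The subtler step, and the main obstacle, is openness of $f$: starting from $f(a)\, R_E\, v$, I would invoke the bounded-morphism property of $\pi_E$ to obtain $\pi_D(a) = \pi_E f(a)\, R_X\, \pi_E(v)$, then lift along the bounded morphism $\pi_D$ to find $b$ with $a\, R_D\, b$ and $\pi_D(b) = \pi_E(v)$. Monotonicity of $f$ yields $f(a)\, R_E\, f(b)$, and the sheaf property of $\pi_E$ applied to the pair $f(b), v$ (both $R_E$-related to $f(a)$, both living over $\pi_E(v)$) pins down $f(b) = v$. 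This is the step that genuinely uses sheafhood on both sides, not merely monotonicity.

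For the biconditional, the forward direction is straightforward: $\Delta$ is automatically monotone because $a\, R_D\, b$ gives $(a,a)\, R_{D^2_X}\, (b,b)$ coordinatewise, and for openness any lift $(a,a)\, R_{D^2_X}\, (b,b')$ gives $a\, R_D\, b$ and $a\, R_D\, b'$ with $\pi(b) = \pi(b')$, which the sheaf property collapses to $b = b' =: c$, exhibiting $(b,b') = \Delta(c)$. Conversely, assuming both $\pi$ and $\Delta$ are bounded morphisms, any configuration $a\, R_D\, b$, $a\, R_D\, b'$ with $\pi(b) = \pi(b')$ produces $(b, b') \in D^2_X$ and $\Delta(a)\, R_{D^2_X}\, (b, b')$, whose lift along $\Delta$ yields $c$ with $\Delta(c) = (b, b')$, forcing $b = c = b'$. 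In all three parts the only non-mechanical move is locating which sheaf to invoke at each step; beyond that bookkeeping there is no genuine categorical obstacle.
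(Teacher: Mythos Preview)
The paper does not supply its own proof of this statement; it is recorded as a known ``Fact'' with the argument deferred to \cite{kis11}. Your elementwise verification is correct and is the standard one: each part unwinds to a coordinatewise application of the sheaf uniqueness condition \eqref{itm:sheaf.Kripke.1} together with the back-lifting property of bounded morphisms, and you have correctly identified the only nontrivial step---openness of $f$ in the second claim, which genuinely requires sheafhood of both $\pi_D$ and $\pi_E$. The only minor omission is that in the first claim you verify openness and uniqueness for $\pi^n$ but not monotonicity; this is immediate from $\pi^n = \pi \cmp p_i$, so nothing of substance is missing.
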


In short, the simple combination of \eqref{eq:FOML.semantics.term}--\eqref{eq:FOML.semantics.vacuous}, for classical first-order logic, and \eqref{eq:FOML.semantics.boolean} and \eqref{eq:FOML.semantics.modal}, for propositional modal logic, is made possible by Kripke sheaves and \autoref{thm:sheaves.full.subcat}.
And this simple combination makes the logic of Kripke-sheaf semantics the simple union of classical first-order logic and modal logic.

\begin{fact}\label{thm:FOML.completeness}
Let $\sys{FOK}$ be the first-order modal logic that consists of all the rules and axioms of classical first-order logic, and the rules and axioms of propositional modal logic $\sys{K}$.
Then $\sys{FOK}$ is sound and complete with respect to the Kripke-sheaf models.
The same holds with $\sys{S4}$ (or $\sys{S5}$, respectively) in place of $\sys{K}$, with respect to the subclass of Kripke-sheaf models over preorders (or equivalence relations).%
\footnote{\strut%
See, e.g., Corollary 6.1.24 of \cite{gab09}.
}
\end{fact}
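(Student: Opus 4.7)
The plan is to prove soundness and completeness separately, exploiting the modular design of Kripke-sheaf semantics to isolate the classical first-order component from the modal one.

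For soundness, I would verify each axiom and rule of $\sys{FOK}$ on an arbitrary Kripke-sheaf model $\pi : (D, R_D) \to (X, R_X)$. The classical first-order axioms are validated by the adjoint structure $\exists_p \dashv p^{-1} \dashv \forall_p$ on the Boolean algebras $\pw(D^n_X)$; the argument is the standard hyperdoctrine-style one recalled in \autoref{sec:quantification.classical}, and it goes through unchanged because $\Scott{\, \bar{x} \mid \varphi \,}$ lives in a Boolean algebra whether or not $\varphi$ contains $\Box$. The $\sys{K}$ axiom follows from $\forall_\opprel{R_{D^n_X}}$ being a right adjoint, hence meet-preserving, and necessitation follows because it also preserves the top element. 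For $\sys{S4}$ and $\sys{S5}$, I would use the fact from \autoref{sec:Kripke.topological} that $\Preord$ and $\Equiv$ are initially closed in $\Kr$, so $R_{D^n_X}$, being an initial lift of the projections, inherits reflexivity, transitivity, and symmetry from $R_D$; this validates T, 4, and B in every context. The step that genuinely uses the sheaf condition is the validation of modalized instances of substitution and of $\forall x \varphi \to \varphi[t/x]$: \autoref{thm:sheaves.full.subcat} guarantees that every term interpretation $\Scott{\, \bar{y} \mid \bar{t} \,} : D^m_X \to D^n_X$ is a bounded morphism, whence the computation
\[ \Scott{\, \bar{y} \mid (\Box \varphi)[\bar{t}/\bar{x}] \,} = \Scott{\bar{t}}^{-1} \cmp \forall_\opprel{R_{D^n_X}} \Scott{\, \bar{x} \mid \varphi \,} = \forall_\opprel{R_{D^m_X}} \cmp \Scott{\bar{t}}^{-1} \Scott{\, \bar{x} \mid \varphi \,} = \Scott{\, \bar{y} \mid \Box(\varphi[\bar{t}/\bar{x}]) \,} \]
from the discussion preceding \autoref{thm:sheaves.full.subcat} goes through, and the substitution axiom is valid on modal formulas.

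For completeness, I would perform a Henkin-style canonical model construction. Expand the language with countably many fresh constants and form the set $X$ of maximal $\sys{FOK}$-consistent, Henkin-complete theories. Define $R_X$ by $w R_X w'$ iff $\{\, \varphi \mid \Box \varphi \in w \,\} \subseteq w'$. For individuals, take $D = \sum_{w \in X} D_w$, where $D_w$ is the set of equivalence classes of closed Henkin terms under the $=$-relation recorded in $w$, and let $\pi(w, [t]) = w$. Define $R_D$ so that $(w, [t]) R_D (w', [t'])$ iff $w R_X w'$ and $[t]$, $[t']$ are pinned together as counterparts by the modalities in $w$, e.g., $\Box \varphi(t) \in w$ implies $\varphi(t') \in w'$ for all $\varphi$. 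A truth lemma by induction on formula complexity, using Henkin witnesses for $\exists y$ and the identification of $R_{D^n_X}$ as the initial lift to handle $\Box$, then completes the argument once any consistent sentence is known to extend to some $w \in X$.

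The hard part will be checking that the resulting $\pi$ is actually a \emph{Kripke sheaf} and not merely a bounded morphism, i.e., that the functionality condition $(R_D \cmp \opprel{R_D}) \cap (\opprel{\pi} \cmp \pi) \subseteq 1_D$ holds: one must show that if $(w, [t])$ transitions to two classes $(w', [t_1])$ and $(w', [t_2])$ in the same target world, then $[t_1] = [t_2]$. This is precisely where the contrast with constant-domain semantics noted in the footnote to \autoref{sec:quantification} becomes substantive; individuals at different worlds are a priori unrelated, and threading them together by equality and the chosen clause for $R_D$ is what makes or breaks the construction. Once this is secured, the rest is routine. For the $\sys{S4}$ and $\sys{S5}$ cases one may alternatively shortcut by invoking Corollary 6.1.24 of \cite{gab09} directly, since in the reflexive-transitive setting the \cite{gab09} and \cite{kis11} definitions of Kripke sheaf coincide (as noted in the footnote to \autoref{def:Kripke.sheaf}); for $\sys{K}$ the direct verification above is required, since the two definitions diverge there.
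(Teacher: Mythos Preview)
The paper does not prove this statement at all: it is listed as a \emph{Fact}, and in the paper's convention Facts are already-known results merely cited from the literature (here, Corollary~6.1.24 of \cite{gab09}).  So there is no ``paper's own proof'' to compare against; your proposal supplies an argument where the paper supplies only a reference.

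That said, your sketch is a sensible outline of how such a proof goes, and you have put your finger on the one genuinely delicate point.  The soundness half is routine and your account of it is accurate.  For completeness, the Henkin canonical-model route is the standard one, and you correctly isolate the sheaf condition \eqref{itm:sheaf.Kripke.1} as the crux.  Your closing observation is apt: the paper's own footnote to \autoref{def:Kripke.sheaf} warns that the \cite{gab09} and \cite{kis11} definitions of Kripke sheaf coincide only over preorders, so the cited Corollary~6.1.24 covers the $\sys{S4}$ and $\sys{S5}$ claims directly, while the general $\sys{K}$ claim---for the \cite{kis11} definition adopted here---is not literally the statement in \cite{gab09} and does require the direct verification you describe (or an appeal to \cite{kis11} instead).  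In that sense your proposal is more careful than the paper's bare citation.
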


\subsection{First-Order Dynamic Epistemic Logic}\label{sec:quantification.fodel}

In \autoref{sec:quantification.kripke} we saw how the Kripke-sheaf structure extended the modal logic of a Kripke model to the first order.
We will now lay out how the same structure can extend the product update of Kripke models to the first order.%
\footnote{\strut%
A sheaf semantics for first-order PAL was given (in a more general, neighborhood setting) in \cite{kis13}.
A first-order extension of PAL was also given briefly in \cite{ma11}, which, however, used constant domains to interpret quantification.
See footnote \ref{fn:constant.domain} as well.
}
One remark is in order:
We saw in Sections \ref{sec:Kripke} and \ref{sec:del} that, whereas bounded morphisms play a more prominent r\^ole than merely monotone maps in the semantics of static modal logic, merely monotone maps are essential in the semantics of DEL\@.
This theme recurs in this subsection.
In \autoref{sec:quantification.kripke}, we reviewed the fact that static first-order modal logic needed Kripke sheaves to make sure all the maps involved were bounded morphisms.
In our new semantics for first-order DEL, however, the structure of the category $\Kr$ of monotone maps will play a central r\^ole again.

Let $(\pi : (D, R_D) \to (X, R_X), \Scott{-}_\pi)$ be a Kripke-sheaf model, and $(E, R_E)$ be a Kripke frame.
We regard the latter as an event model, and assume that preconditions $\Pre(e)$ for $e \in E$ are all (closed) sentences, so that $\Scott{\Pre(e)}_\pi$ makes sense and $\Scott{\Pre(e)}_\pi \subseteq X$.
Then $(X, R_X)$ is product-updated with $(E, R_E)$ into $(X \otimes E, R_{X \otimes E})$.
For the first-order structure, we moreover ``pullback-update'' $(\pi, \Scott{-}_\pi)$, by pulling everything back along the projection $p_X : X \otimes E \to X$.
Recall that $(\pi, \Scott{-}_\pi)$ uses the structure of the slice category $\Kr / (X, R_X)$;
hence $p_X$ induces a pullback functor ${p_X}^\ast : \Kr / (X, R_X) \to \Kr / (X \otimes E, R_{X \otimes E})$.
So we apply this to obtain an updated residence map $\pi_{X \otimes E} = {p_X}^\ast \pi : D_{X \otimes E} \to X \otimes E$, and to obtain $\Scott{-}_{\pi_{X \otimes E}}$ from $\Scott{f}_{\pi_{X \otimes E}} = {p_X}^\ast \Scott{f}_\pi$ for function symbols $f$ and $\Scott{F}_{\pi_{X \otimes E}} = {p_X}^\ast \Scott{F}_\pi$ for relation symbols $F$.
We need to note that the structure of $\Kr$ is essential for the pullback update.
Pullbacks are taken in the category $\Kr$ of monotone maps in general as opposed to bounded morphisms, and along the map $p_X : X \otimes E \to X$, which, as seen in \autoref{sec:del.del}, must not be a bounded morphism for DEL to show interesting behaviors.

Here is an explicit description of the pullback update:
\begin{itemize}
\item
Using the notation $D_w = \pi^{-1}(\{ w \})$, the pullback of $\pi^n : D^n_X \to X :: \bar{a} \mapsto \pi(a_i)$ along $p_X : X \otimes E \to X :: (w, e) \mapsto w$ has the domain
\begin{gather*}
\sum_{(w, e) \in X \otimes E} D_w^n = \{\, (\bar{a}, e) \in D^n_X \times E \mid \pi^n(\bar{a}) \in \Scott{\Pre(e)}_\pi \,\} ,
\end{gather*}
for which we write $D^n_{X \otimes E}$, and projections
\begin{align*}
\pi^n_{X \otimes E} = {p_X}^\ast \pi^n & : D^n_{X \otimes E} \to X \otimes E :: (\bar{a}, e) \mapsto (\pi^n(\bar{a}), e) , \\
p_{D^n_X} & : D^n_{X \otimes E} \to D^n_X :: (\bar{a}, e) \mapsto \bar{a} .
\end{align*}
It also comes with another projection $p_{E, n} : D^n_{X \otimes E} \to E :: (\bar{a}, e) \mapsto e$.
The ``epistemic'' relation $R_{D^n_{X \otimes E}}$ on $D^n_{X \otimes E}$ is an initial lift, viz.\
\begin{gather*}
R_{D^n_{X \otimes E}} = (\opprel{p_{D^n_X}} \cmp R_{D^n_X} \cmp p_{D^n_X}) \cap (\opprel{p_{E, n}} \cmp R_E \cmp p_{E, n}) , \\
\text{i.e.}\quad
(\bar{a}, e_1) R_{D^n_{X \otimes E}} (\bar{b}, e_2) \iff a_1 R_D b_1 \text{, \ldots, } a_n R_D b_n \text{ and } e_1 R_E e_2 .
\end{gather*}
\item
For an $n$-ary function symbol $f$, we have $\Scott{f}_\pi : D^n_X \to D$ and then
\begin{gather*}
\Scott{f}_{\pi_{X \otimes E}} = {p_X}^\ast \Scott{f}_\pi : D^n_{X \otimes E} \to D_{X \otimes E} :: (\bar{a}, e) \mapsto (\Scott{f}_\pi(\bar{a}), e) .
\end{gather*}
\item
For an $n$-ary relation symbol $F$, we have $\Scott{F}_\pi \subseteq D^n_X$ and then
\begin{align*}
\Scott{F}_{\pi_{X \otimes E}}
& = {p_X}^\ast \Scott{F}_\pi
  = {p_{D^n_X}}^{-1} \Scott{F}_\pi
  = \{\, (\bar{a}, e) \in D^n_{X \otimes E} \mid \bar{a} \in \Scott{F}_\pi \,\} \subseteq D^n_{X \otimes E} .
\end{align*}
\end{itemize}

The pullback update indeed updates a Kripke-sheaf model to another:

\begin{theorem}\label{thm:pullback.update}
Given a Kripke-sheaf model $(\pi, \Scott{-}_\pi)$, its pullback update $(\pi_{X \otimes E}, \Scott{-}_{\pi_{X \otimes E}})$ along $p_X : X \otimes E \to X$ is a Kripke-sheaf model.
\end{theorem}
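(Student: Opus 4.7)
The plan is to verify the three conditions of \autoref{def:Kripke.sheaf.model} for $(\pi_{X \otimes E}, \Scott{-}_{\pi_{X \otimes E}})$. The subsets $\Scott{F}_{\pi_{X \otimes E}} = p_{D^n_X}^{-1} \Scott{F}_\pi$ lie in $D^n_{X \otimes E}$ by construction, so the real work splits into showing that (a) $\pi_{X \otimes E}$ is a Kripke sheaf and (b) each $\Scott{f}_{\pi_{X \otimes E}} = {p_X}^\ast \Scott{f}_\pi$ is an arrow of the slice $\Kr / (X \otimes E, R_{X \otimes E})$.

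For (a) I would invoke the final clause of \autoref{thm:sheaves.full.subcat}: $\pi_{X \otimes E}$ is a Kripke sheaf iff both $\pi_{X \otimes E}$ and the diagonal $\Delta_{X \otimes E} : D_{X \otimes E} \to D^2_{X \otimes E}$ are bounded morphisms. Since $\pi$ is a Kripke sheaf it is in particular a bounded morphism, and $\pi_{X \otimes E}$, being its pullback along $p_X$ in $\Kr$, is a bounded morphism by \autoref{thm:pullback.preserve.open}. The key observation for the diagonal is that the pullback functor ${p_X}^\ast : \Kr / (X, R_X) \to \Kr / (X \otimes E, R_{X \otimes E})$ preserves finite products in the slice (being right adjoint to postcomposition with $p_X$); in particular $\Delta_{X \otimes E} = {p_X}^\ast \Delta_\pi$. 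Since $\Delta_\pi : D \to D^2_X$ is itself a bounded morphism (by \autoref{thm:sheaves.full.subcat} applied to $\pi$), another appeal to \autoref{thm:pullback.preserve.open} gives that $\Delta_{X \otimes E}$ is a bounded morphism. The same product-preservation settles (b): $\Scott{f}_{\pi_{X \otimes E}}$ is the image under ${p_X}^\ast$ of the arrow $\Scott{f}_\pi : D^n_X \to D$ of $\Kr / (X, R_X)$, hence an arrow $D^n_{X \otimes E} \to D_{X \otimes E}$ of $\Kr / (X \otimes E, R_{X \otimes E})$.

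The main obstacle is to pin down the product-preservation claim concretely: one must check that the $n$-fold fibered product $D^n_{X \otimes E}$ described in the text, built as the initial lift of $p_{D^n_X}$ and $p_{E, n}$, coincides with the iterated pullback of $\pi$ along $p_X$ in $\Kr$. This reduces to the pasting lemma for pullbacks in $\Sets$ together with \autoref{thm:Kr.topological}, which ensures that limits in $\Kr$ are computed by taking the underlying set-level limit and equipping it with the initial lift --- so both constructions land on the same carrier set with the same relation, and (a) and (b) together give the theorem.
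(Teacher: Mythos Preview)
Your proposal is correct and follows essentially the same route as the paper: both arguments use the characterization in \autoref{thm:sheaves.full.subcat} (that $\pi$ is a Kripke sheaf iff $\pi$ and its diagonal are bounded morphisms), apply \autoref{thm:pullback.preserve.open} to conclude that ${p_X}^\ast\pi$ and ${p_X}^\ast\Delta$ are bounded, and invoke preservation of finite limits by the pullback functor ${p_X}^\ast$ to identify ${p_X}^\ast\Delta$ with the diagonal of $\pi_{X\otimes E}$ and ${p_X}^\ast\pi^n$ with the $n$-fold product over $X\otimes E$. Your added remarks on part (b) and on matching the explicit description of $D^n_{X\otimes E}$ with the categorical limit are sound elaborations of points the paper leaves implicit.
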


\begin{proof}
As in \autoref{thm:sheaves.full.subcat}, both $\pi$ and the diagonal map $\Delta$ of $\pi$ are bounded morphisms, and hence \autoref{thm:pullback.preserve.open} implies that both $\pi_{X \otimes E} = {p_X}^\ast \pi$ and ${p_X}^\ast \Delta$ are bounded morphisms.
Yet ${p_X}^\ast \Delta$ is the diagonal map of $\pi_{X \otimes E}$, since the pullback functor ${p_X}^\ast$ preserves finite limits.
Therefore $\pi_{X \otimes E}$ is a Kripke sheaf by \autoref{thm:sheaves.full.subcat}.
Moreover, for each $n \in \NN$, ${p_X}^\ast \pi^n$ is the $n$-fold product of $\pi_{X \otimes E}$ over $X \otimes E$, since ${p_X}^\ast$ preserves finite limits.
\end{proof}

Now we have two Kripke-sheaf models, $(\pi, \Scott{-}_\pi)$ before update and $(\pi_{X \otimes E}, \Scott{-}_{\pi_{X \otimes E}})$ after, and we can use relations between them to interpret the DEL operators $\nec{E, e}$ and $\pos{E, e}$.
Here is a key idea:
As in \eqref{itm:FOML.semantics.idea.1}--\eqref{itm:FOML.semantics.idea.2}, each sheaf model has a Kripke model for $n$-ary properties, $D^n_X$ and $D^n_{X \otimes E}$;
so we treat $D^n_X$ and $D^n_{X \otimes E}$ as the product-update structure of \autoref{sec:del.del} that interprets the application of $\nec{E, e}$ and $\pos{E, e}$ to $n$-ary formulas-in-contexts.
Since $\Scott{\, \bar{x} \mid \Pre(e) \,}_\pi = (\pi^n)^{-1} \Scott{\Pre(e)}$, observe
\begin{gather*}
D^n_{X \otimes E} = \sum_{e \in E} \Scott{\, \bar{x} \mid \Pre(e) \,}_\pi = \{\, (\bar{a}, e) \in D^n_X \times E \mid \bar{a} \in \Scott{\, \bar{x} \mid \Pre(e) \,}_\pi \,\}
\end{gather*}
and note the similarity to \eqref{eq:del.product.update}.
We moreover have canonical maps as with \eqref{eq:del.product.update}, viz.\ the projection $p_{D^n_X} : D^n_{X \otimes E} \to D^n_X$ above and, for each $e \in E$,
\begin{itemize}
\item
The inclusion map $i^n_e : \Scott{\, \bar{x} \mid \Pre(e) \,}_\pi \incto D^n_X$.
\item
The coproduct injection $q^n_e : \Scott{\, \bar{x} \mid \Pre(e) \,}_\pi \to D^n_{X \otimes E} :: \bar{a} \mapsto (\bar{a}, e)$.
\end{itemize}
These maps tabulate a relation, $R^n_e = q^n_e \cmp \opprel{i^n_e} : D^n_X \relto D^n_{X \otimes E}$, which is dual to the two maps
\begin{align*}
\forall_\opprel{R^n_e} = \forall_{i^n_e} \cmp (q^n_e)^{-1},
\exists_\opprel{R^n_e} & = \exists_{i^n_e} \cmp (q^n_e)^{-1} : \pw(D^n_{X \otimes E}) \to \pw(D^n_X) .
\end{align*}
These then interpret $\nec{E, e}$ and $\pos{E, e}$ applied to $n$-ary formulas-in-contexts $(\, \bar{x} \mid \varphi \,)$, i.e.,
\begin{align*}
\Scott{\, \bar{x} \mid \nec{E, e} \varphi \,}_\pi & = \forall_\opprel{R^n_e} \Scott{\, \bar{x} \mid \varphi \,}_{\pi_{X \otimes E}} , &
\Scott{\, \bar{x} \mid \pos{E, e} \varphi \,}_\pi & = \exists_\opprel{R^n_e} \Scott{\, \bar{x} \mid \varphi \,}_{\pi_{X \otimes E}} ,
\end{align*}
which is just an ``in context'' version of \eqref{itm:DEL.semantics.modal}.

This defines our sheaf semantics for first-order DEL---%
but we need to check its well-definedness, similarly to the remark following \autoref{def:Kripke.sheaf.model}.
That is, we need
\begin{align*}
\forall_\opprel{R^m_e} \cmp {\Scott{\bar{t}}_{\pi_{X \otimes E}}}^{-1} \Scott{\, \bar{x} \mid \varphi \,}_{\pi_{X \otimes E}}
& = \Scott{\, \bar{y} \mid \nec{E, e} (\varphi [\bar{t} / \bar{x}]) \,}_\pi \\
& = \Scott{\, \bar{y} \mid (\nec{E, e} \varphi) [\bar{t} / \bar{x}] \,}_\pi
  = {\Scott{\bar{t}}_\pi}^{-1} \cmp \forall_\opprel{R^n_e} \Scott{\, \bar{x} \mid \varphi \,}_{\pi_{X \otimes E}} ,
\end{align*}
and similarly for $\pos{E, e}$.
Yet these are the case because $\Scott{\bar{t}}_{\pi_{X \otimes E}} \cmp R^m_e = R^n_e \cmp \Scott{\bar{t}}_\pi$ by

\begin{theorem}\label{thm:pullback.update.commute}
For any arrow $f : D^m_X \to D^n_X$ of $\Kr / (X, R_X)$,
\begin{align*}
{p_X}^\ast f \cmp R^m_e & = R^n_e \cmp f , &
\opprel{({p_X}^\ast f)} \cmp R^n_e & = R^m_e \cmp \opprel{f} .
\end{align*}
\end{theorem}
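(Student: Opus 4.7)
The plan is to reduce both identities to the Beck-Chevalley condition of \autoref{thm:beck.chevalley}, replicating the ``two-form'' trick for $R_e = q_e \cmp \opprel{i_e} = \opprel{i} \cmp q'_e$ noted on page~\pageref{page:beck.chevalley.product.update}. First I would extend that trick to $n$-ary fibres: write $j_n : D^n_{X \otimes E} \incto D^n_X \times E$ for the inclusion and $\iota^n_e : D^n_X \to D^n_X \times E :: \bar a \mapsto (\bar a, e)$ for the ``tagging'' injection. The square with corners $\Scott{\,\bar x \mid \Pre(e)\,}_\pi$, $D^n_{X \otimes E}$, $D^n_X$, $D^n_X \times E$ and edges $q^n_e$, $i^n_e$, $\iota^n_e$, $j_n$ is a pullback in $\Sets$ by exactly the reasoning applied at \eqref{eq:del.product.update.diagram} (the fibre of $j_n$ over $(\bar a, e)$ is $\{(\bar a, e)\}$ when $\bar a \in \Scott{\,\bar x \mid \Pre(e)\,}_\pi$ and empty otherwise). \autoref{thm:beck.chevalley} then yields the alternative form $R^n_e = q^n_e \cmp \opprel{i^n_e} = \opprel{j_n} \cmp \iota^n_e$, which I would use as my working expression.

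Next I would identify two commuting squares that bridge the $m$-ary and $n$-ary data along $f$. Because $f$ lives over $X$, the concrete action ${p_X}^\ast f : (\bar a, e'') \mapsto (f(\bar a), e'')$ makes the square with verticals $j_m, j_n$ and horizontals ${p_X}^\ast f$, $f \times 1_E$ commute; a direct set-theoretic check, using $\pi^n \cmp f = \pi^m$, confirms it is a pullback. Similarly, the square with verticals $\iota^m_e, \iota^n_e$ and horizontals $f$, $f \times 1_E$ commutes and is a pullback on the nose, with pullback object $D^m_X$. \autoref{thm:beck.chevalley} applied to these squares yields, respectively,
\begin{gather*}
{p_X}^\ast f \cmp \opprel{j_m} = \opprel{j_n} \cmp (f \times 1_E),
\qquad
\iota^m_e \cmp \opprel{f} = \opprel{(f \times 1_E)} \cmp \iota^n_e .
\end{gather*}

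Given these ingredients, both identities follow by short calculations in $\Rel$. The first chains as ${p_X}^\ast f \cmp R^m_e = {p_X}^\ast f \cmp \opprel{j_m} \cmp \iota^m_e = \opprel{j_n} \cmp (f \times 1_E) \cmp \iota^m_e = \opprel{j_n} \cmp \iota^n_e \cmp f = R^n_e \cmp f$; the second is dual, $\opprel{({p_X}^\ast f)} \cmp R^n_e = \opprel{(j_n \cmp {p_X}^\ast f)} \cmp \iota^n_e = \opprel{((f \times 1_E) \cmp j_m)} \cmp \iota^n_e = \opprel{j_m} \cmp \iota^m_e \cmp \opprel{f} = R^m_e \cmp \opprel{f}$. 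The step I expect to need genuine care is verifying that the two squares involving $f \times 1_E$ are honest pullbacks in $\Sets$ rather than mere commuting squares; this is precisely where the hypothesis that $f$ is a morphism over $(X, R_X)$, i.e.\ $\pi^n \cmp f = \pi^m$, does the work of keeping the side constraints defining $D^\ell_{X \otimes E}$ consistent across each square.
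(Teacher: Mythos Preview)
Your argument is correct and rests on the same device as the paper's---reducing both identities to \autoref{thm:beck.chevalley} applied to two pullback squares in $\Sets$---but you route the computation through the ambient products $D^n_X \times E$ and the map $f \times 1_E$, whereas the paper works directly with the tabulating maps. Concretely, the paper uses $R^n_e = q^n_e \cmp \opprel{i^n_e}$ as given, and takes its two pullback squares to have corners $(\pi^m)^{-1}\Scott{\Pre(e)}_\pi$, $D^m_X$, $(\pi^n)^{-1}\Scott{\Pre(e)}_\pi$, $D^n_X$ (for the right square) and $(\pi^m)^{-1}\Scott{\Pre(e)}_\pi$, $D^m_{X \otimes E}$, $(\pi^n)^{-1}\Scott{\Pre(e)}_\pi$, $D^n_{X \otimes E}$ (for the left), with the middle vertical being the restriction of $f$ to $(\pi^m)^{-1}\Scott{\Pre(e)}_\pi$. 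This avoids the preliminary step of establishing the alternative form $R^n_e = \opprel{j_n} \cmp \iota^n_e$ and keeps the diagram one object smaller, but your detour through $f \times 1_E$ has the mild advantage that the commutativity $(f \times 1_E) \cmp \iota^m_e = \iota^n_e \cmp f$ is transparent without invoking $\pi^n \cmp f = \pi^m$ (that hypothesis is used only once, in verifying the $j_m$--$j_n$ square is a pullback). Either way the content is the same.
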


\begin{proof}
This follows from \autoref{thm:beck.chevalley} since the following squares are both pullbacks in $\Sets$.
\begin{gather*}
\begin{gathered}
\begin{tikzpicture}[x=20pt,y=20pt]
\coordinate (O) at (0,0);
\coordinate (r) at (5,0);
\coordinate (d) at (0,-2.5);
\node (A0) [inner sep=0.25em] at (O) {$D^m_{X \otimes E}$};
\node (A1) [inner sep=0.25em] at ($ (A0) + (r) $) {$(\pi^m)^{-1} \Scott{\Pre(e)}_\pi$};
\node (A2) [inner sep=0.25em] at ($ (A1) + (r) $) {$D^m_X$};
\node (B0) [inner sep=0.25em] at ($ (A0) + (d) $) {$D^n_{X \otimes E}$};
\node (B1) [inner sep=0.25em] at ($ (B0) + (r) $) {$(\pi^n)^{-1} \Scott{\Pre(e)}_\pi$};
\node (B2) [inner sep=0.25em] at ($ (B1) + (r) $) {$D^n_X$};
\draw [>->] (A1) -- (A0) node [pos=0.5,inner sep=2pt,above] {$q^m_e$};
\draw [right hook->] (A1) -- (A2) node [pos=0.5,inner sep=2pt,above] {$i^m_e$};
\draw [>->] (B1) -- (B0) node [pos=0.5,inner sep=2pt,below] {$q^n_e$};
\draw [right hook->] (B1) -- (B2) node [pos=0.5,inner sep=2pt,below] {$i^n_e$};
\draw [->] (A0) -- (B0) node [pos=0.5,inner sep=2pt,left] {${p_X}^\ast f$};
\draw [->] (A1) -- (B1);
\draw [->] (A2) -- (B2) node [pos=0.5,inner sep=2pt,right] {$f$};
\coordinate (A1-pb-l) at ($ (A1) + (-1,-1) $);
\coordinate (A1-pb-r) at ($ (A1) + (1,-1) $);
\draw ($ (A1-pb-l) + (0.45,0) $) -- (A1-pb-l) -- ($ (A1-pb-l) + (0,0.45) $);
\draw ($ (A1-pb-r) + (-0.45,0) $) -- (A1-pb-r) -- ($ (A1-pb-r) + (0,0.45) $);
\end{tikzpicture}
\end{gathered}
\qedhere
\end{gather*}
\end{proof}

Now, the semantics validates all the reduction axioms of propositional DEL, simply because $D^n_{X \times E}$ is just the product update of $D^n_X$ with $(E, R_E)$.
One more reduction axiom is needed, however---%
viz.\ for quantifiers.
And here it is:
\begin{gather}
\label{eq:FODEL.reduction.exists}
\nec{E, e} \forall y \ldot \varphi \equiv \forall y \ldot \nec{E, e} \varphi .
\end{gather}

\begin{proof}
We show the validity of \eqref{eq:FODEL.reduction.exists}.
Let $p : D^{n + 1}_X \to D^n_X :: (\bar{a}, b) \mapsto \bar{a}$.
Then \autoref{thm:pullback.update.commute} implies $\opprel{({p_X}^\ast p)} \cmp R^n_e = R^{n + 1}_e \cmp \opprel{p}$, and dually $\forall_\opprel{R^n_e} \cmp \forall_{{p_X}^\ast p} = \forall_p \cmp \forall_\opprel{R^{n + 1}_e}$.
Therefore
\begin{align*}
\Scott{\, \bar{x} \mid \nec{E, e} \forall y \ldot \varphi \,}_\pi
& = \forall_\opprel{R^n_e} \cmp \forall_{{p_X}^\ast p} \Scott{\, \bar{x}, y \mid \varphi \,}_{\pi_{X \otimes E}}
  = \forall_p \cmp \forall_\opprel{R^{n + 1}_e} \Scott{\, \bar{x}, y \mid \varphi \,}_{\pi_{X \otimes E}}
  = \Scott{\, \bar{x} \mid \forall y \ldot \nec{E, e} \varphi \,}_\pi .
\qedhere
\end{align*}
\end{proof}

This now gives a completeness result extending \autoref{thm:FOML.completeness} by the standard method of reduction.

\begin{theorem}
Let $\sys{FODEL\text{-}K}$ be the first-order modal logic that consists of $\sys{FOK}$, all the reduction axioms of propositional DEL, and \eqref{eq:FODEL.reduction.exists}.
Then $\sys{FODEL\text{-}K}$ is sound and complete with respect to the Kripke-sheaf models with pullback updates.
The versions with $\sys{S4}$ and $\sys{S5}$ in place of $\sys{K}$ hold with respect to the obvious subclasses of Kripke-sheaf models.
\end{theorem}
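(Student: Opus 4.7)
The plan is the standard DEL \emph{reduction method}: use the reduction axioms to translate every formula of $\sys{FODEL\text{-}K}$ into a provably and semantically equivalent formula of $\sys{FOK}$, and then invoke \autoref{thm:FOML.completeness}.

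For soundness, \autoref{thm:pullback.update} ensures that a pullback update of a Kripke-sheaf model is again a Kripke-sheaf model, so the $\sys{FOK}$-fragment is validated. The new axiom \eqref{eq:FODEL.reduction.exists} was checked immediately above. The remaining propositional DEL reduction axioms lift uniformly in context: because ${p_X}^\ast$ preserves finite limits, for each arity $n$ the Kripke frame $(D^n_{X \otimes E}, R_{D^n_{X \otimes E}})$ is precisely the product update of $(D^n_X, R_{D^n_X})$ by $(E, R_E)$, with $p_{D^n_X}$, $p_{E,n}$, $i^n_e$ and $q^n_e$ playing the r\^oles of $p_X$, $p_E$, $i_e$ and $q_e$ of \autoref{sec:del.del}. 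The pullback square in the proof of \autoref{thm:pullback.update.commute} (applied to $1_{D^n_X}$) yields the in-context Beck--Chevalley identity $p_{D^n_X} \cmp R^n_e = i^n_e \cmp \opprel{i^n_e}$, and together with $\Scott{F}_{\pi_{X \otimes E}} = {p_{D^n_X}}^{-1} \Scott{F}_\pi$ this gives the atomic reduction $\nec{E, e} F \bar{x} \mequiv \Pre(e) \mimp F \bar{x}$; the derivations \eqref{itm:DEL.reduction.atom}--\eqref{itm:DEL.reduction.box} for $\wedge$, $\lnot$ and $\Box$ then carry over verbatim in each context.

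For completeness, define a translation $t$ that eliminates $\nec{E, e}$ and $\pos{E, e}$ from the innermost occurrence outward: apply the atomic axiom at predicates, the propositional DEL reductions at $\wedge$, $\lnot$ and $\Box$, and \eqref{eq:FODEL.reduction.exists} together with its $\pos{E,e}/\exists$ dual at quantifiers. Termination is secured by a well-founded lexicographic measure on, say, the number of dynamic-operator occurrences followed by formula size, so $t(\varphi)$ is a formula of $\sys{FOK}$. Each rewriting step is a theorem of $\sys{FODEL\text{-}K}$ and, by the soundness just established, also a semantic equivalence; hence $\vdash_{\sys{FODEL\text{-}K}} \varphi \mequiv t(\varphi)$ and $\models \varphi \mequiv t(\varphi)$. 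Completeness then follows: $\models \varphi$ implies $\models t(\varphi)$, hence $\vdash_{\sys{FOK}} t(\varphi)$ by \autoref{thm:FOML.completeness}, and finally $\vdash_{\sys{FODEL\text{-}K}} \varphi$. The $\sys{S4}$ and $\sys{S5}$ variants go through by the same argument, since initial lifts in $\Kr$ preserve reflexivity, transitivity and symmetry (\autoref{sec:Kripke.topological}), so pullback update keeps us inside the intended subclass of Kripke-sheaf models.

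The main obstacle will be the \emph{atomic case} of the translation: in the propositional setting it follows transparently from $\Scott{p}_{X \otimes E} = {p_X}^{-1} \Scott{p}_X$, whereas here one must both establish $p_{D^n_X} \cmp R^n_e = i^n_e \cmp \opprel{i^n_e}$ from the relevant pullback square and verify that the quantifier reduction commutes correctly with term substitution via \autoref{thm:pullback.update.commute}. Once these pullback identities are in hand, the remainder is routine bookkeeping.
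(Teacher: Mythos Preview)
Your proposal is correct and follows exactly the route the paper indicates: the paper offers no detailed proof but simply remarks that ``the semantics validates all the reduction axioms of propositional DEL, simply because $D^n_{X \otimes E}$ is just the product update of $D^n_X$ with $(E, R_E)$'' and that the theorem then follows ``by the standard method of reduction'' from \autoref{thm:FOML.completeness}. You have faithfully unpacked that one-line sketch, including the per-arity identification of the product-update structure and the translation argument for completeness.
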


\section{Connections to Preceding Approaches}\label{sec:connections}

There have been approaches to modal logic and DEL that take advantage of concepts and methods of category theory in different ways from our approach.
This section discusses connections between some of these approaches and ours.%
\footnote{\strut%
We thank anonymous reviewers for references, and for their suggestions that the connections should be discussed.
}

Semantics of modal logic $\sys{S4}$ shows various categorical structures.
A Kripke frame $(X, \precsim)$ for $\sys{S4}$ is a preorder, and hence itself a category.
Also, the family $\Op X$ of ${\precsim}$-upward closed subsets of $X$ forms a topology on $X$, and hence a category.
Moreover, the interior operation $\int : \pw X \to \Op X$ of this topology is right adjoint to the inclusion $i : \Op X \incto \pw X$, so that $\Box = i \cmp \int$ is the comonad of the adjunction.%
\footnote{\strut%
See Section 10.4 of \cite{awo10} and Subsection 5.1.1 of \cite{jac16} for comonads.
In fact, instead of a poset $\pw X$ one can take a general category $\C$ and a comonad $\Box$ on $\C$ to interpret $\sys{S4}$ (perhaps with a non-modal base weaker than classical);
see e.g.\ \cite{ale01}.
}
The notion of (Kripke) sheaf lifts all these structures to the first order:
A Kripke sheaf over a preorder $(X, {\precsim})$ is equivalently a ``presheaf'' on the category $(X, {\precsim})$, an ``\'etale space'' over the space $(X, \Op X)$, and a ``sheaf'' on the category $\Op X$.%
\footnote{\strut%
See Chapters I through III of \cite{mac92} for these concepts.
}
Moreover, the adjunction $i \dashv \int$ is lifted to a ``geometric morphism'' $i^\ast \dashv i_\ast$ from $\Sets / X$ to the ``topos'' of sheaves over $\Op X$, so that its comonad $i^\ast \cmp i_\ast$ induces $\Box : \pw D \to \pw D$ for every Kripke sheaf $\pi : (D, {\precsim}_D) \to (X, {\precsim})$.%
\footnote{\strut%
See Chapter VII of \cite{mac92} for geometric morphisms in general, and Section 5.2 of \cite{awo08} for the geometric-morphism interpretation of $\sys{FOS4}$.
}
Not all these categorical structures carry over to the general (i.e.\ non-$\sys{S4}$) Kripke semantics.
It will be interesting, however, to investigate how to integrate them with DEL updates, given that epistemic relations are normally assumed to be preorders.
In fact, given a monotone map $f : (X, {\precsim}_X) \to (Y, {\precsim}_Y)$ of preorders, the pullback functor $f^\ast$ (which plays a key r\^ole in the pullback update of \autoref{sec:quantification.fodel}) has a right adjoint $f_\ast$, and $f^\ast \dashv f_\ast$ is a typical example of geometric morphism, from the topos of Kripke sheaves over $(X, {\precsim}_X)$ to those over $(Y, {\precsim}_Y)$.

A categorical approach that covers the entire Kripke semantics (for static modal logic) is given by coalgebras (see, e.g., \cite{mos99,cir11,kup11,jac16}).
The category $\Rel$ of relations is the ``Kleisli category'' of the ``powerset monad'' $\pw : \Sets \to \Sets$, meaning, among other things, that the relations $R : X \relto Y$ correspond 1--1 to the functions $r : X \to \pw Y$.%
\footnote{\strut%
This correspondence can also be described as between $R : X \times Y \to \2$ and $r : X \to (Y \to \2)$.
See Chapter VI of \cite{mac98}, Chapter 10 of \cite{awo10}, and Chapter 5 of \cite{jac16} for monads and their Kleisli categories, and $\pw$ and $\Rel$ as an example.
\autoref{thm:equivalence.Kripke} can then be read as stating that $\exists_-$ is a ``comparison functor'' that presents $\Rel$ as the category $\CABAvee$ of free algebras of $\pw$.
}
Indeed, the powerset monad is precisely the duality $\exists_- : \Rel \to \CABAvee$ restricted to $\Sets$ (and followed by the forgetful $U : \CABAvee \to \Sets$).
The correspondence implies that the Kripke frames $(X, R : X \relto X)$ are exactly the coalgebras $r : X \to \pw X$ for the endofunctor $\pw$.
Their homomorphisms, from $r_X : X \to \pw X$ to $r_Y : Y \to \pw Y$, are normally defined as functions $f : X \to Y$ satisfying $\exists_f \cmp r_X = r_Y \cmp f$, which amounts to \eqref{itm:open.Kripke.2}, $f \cmp R_X = R_Y \cmp f$, for the corresponding relations $R_X$ and $R_Y$.
Therefore, in the coalgebraic approach to Kripke semantics, $\Coalg(\pw)$, the category of coalgebras and their homomorphisms normally considered, is---%
like the category $\CABAO$ of CABAOs and their homomorphisms---%
equivalent to the category $\Krb$ of bounded morphisms.
In this article, on the other hand, we emphasized the significance of the topological category $\Kr$ of monotone maps for DEL\@.%
\footnote{\strut%
One can of course express $\Kr$ with coalgebras, by defining a weaker notion of homomorphism, corresponding to monotone maps---%
i.e., a function $f : X \to Y$ is ``continuous'' from $r_X : X \to \pw X$ to $r_Y : Y \to \pw Y$ if $\exists_f \cmp r_X \leqslant r_Y \cmp f$ (i.e.\ $\exists_f \cmp r_X(x) \subseteq r_Y \cmp f(x)$ for all $x \in X$).
On the other hand, Kripke sheaves can be defined within $\Coalg(\pw)$.
One can rewrite \eqref{itm:sheaf.Kripke.1} as a homomorphism $\pi : D \to X$ from $r_D : D \to \pw D$ to $r_X : X \to \pw X$ satisfying
\begin{itemize}
\item
for each $a \in D$, the restriction of $\pi$ to $r_D(a)$ is an injection.
\end{itemize}
Or it may be better to use the characterization in \autoref{thm:sheaves.full.subcat}---%
i.e., $\pi$ is a Kripke sheaf iff both $\pi$ and $\Delta$ are homomorphisms.
See Fact 4.2 of \cite{kis11}.
The latter definition can indeed be extended to more kinds of coalgebras and not just Kripke frames.
}

There have in fact been algebraic \cite{kur13} and coalgebraic \cite{bal03,cir07} approaches to DEL\@.
In particular, the algebraic approach by Kurz and Palmigiano \cite{kur13} uses ideas closely related to those in \autoref{sec:del} of this article:
They observe that the product update $X \otimes E$ is a subframe of the coproduct $X \times E = \sum_{e \in E} X$, and study the dual structure, i.e.\ a quotient of the product $\prod_{e \in E} \pw(X)$.%
\footnote{\strut%
It is therefore the maps $i$ and $q'_e$ in \eqref{eq:del.product.update.diagram} that play a central r\^ole in \cite{kur13}.
In contrast, we put more emphasis on $q_e$ and $i_e$, though $R_e = q_e \cmp \opprel{i_e} = \opprel{i} \cmp q'_e$ as noted on p.\ \pageref{page:beck.chevalley.product.update}.
Also, in our treatment, the characterization of $X \times E$ as a product plays a key role as well, since the Kripke frame on $X \times E$ is the product of $X$ and $E$, but not the coproduct of $(X)_{e \in E}$, in $\Kr$.
Moreover, we treat $\Pre(e) \mimp {-}$ and $\Pre(e) \wedge {-}$ as the modal operators of $i_e \cmp \opprel{i_e}$, a perspective that then enables us to prove the reduction axioms \eqref{itm:PAL.reduction.box} and \eqref{itm:DEL.reduction.box} directly by the relation-modality duality $\forall_\opprel{-}$.
This should be contrasted to the treatment of $\Pre(e) \mimp {-}$ and $\Pre(e) \wedge {-}$ in proofs in Section 7 of \cite{kur13}.
}
Kurz and Palmigiano are well aware that these constructions do not take place in $\Krb$ or $\CABAO$ but rather in $\Kr$ and $\CABAOc$.
They stop short, however, of studying $\Kr$ or $\CABAOc$, saying that
``for these dual characterizations to be defined, an \textit{a priori} specification of the fully fledged category-theoretic environment in which these constructions are taken is actually not needed'' (\cite{kur13}, 2).
We, in contrast, work under the philosophy that, when one finds a good heuritstics that leads to a new result, they should study the heuritstics itself and shape it into a theory that yields more results systematically.
The point of \autoref{sec:quantification} was to demonstrate how to put to use more structures in $\Kr$.
It should also be stressed that we use one more category, viz.\ $\Rel$, and take essential advantage of the fundamental relation-modality dualities of \autoref{sec:Kripke.duality}, and not just the derivative dualities of \autoref{sec:Kripke.frames} between Kripke frames and CABAOs.

\section{Conclusion and Future Work}\label{sec:conclusion}

In this paper we have recast the standard semantics of dynamic epistemic logic (DEL) in categorical terms and shed new structural light on it.
It should be clear by now how conceptually powerful the new way of applying categorical method is:
As demonstrated by our new semantics for first-order DEL, our categorical, structural perspective tightly connects what we want (or need) logically or syntactically and what we need (or want) semantically.

Our new application of the categorical methodology promises to be helpful on multiple fronts of the study of DEL\@.
Naturally expected future work is to extend our approach to more vocabulary (e.g.\ common knowledge or $\mu$-calculus), more types of logic (e.g.\ higher-order DEL or typed DEL), more structures (e.g.\ probability), and more general settings (e.g.\ intuitionistic or constructive modal logic).
Various updates can be expressed as functors between categories of models, and these expressions are expected to help characterize properties of updates such as the preservation of constructions or the admitting of reduction axioms.
As mentioned in \autoref{sec:connections}, the case of $\sys{S4}$ can be formulated in terms of toposes.
Or our structural, topological ideas on the category $\Kr$ of monotone maps for DEL can be used to augment the coalgebraic generalization of the subcategory $\Krb$ of bounded morphisms.
One may also find, e.g., \eqref{eq:FODEL.reduction.exists} too strong for their purpose, and hence need to replace the pullback update with a more flexible idea.
Furthermore, although we formulated a categorical semantics, we did not mention a crucial aspect of categorical logic---%
viz.\ an interpretation $\Scott{-}$ as a homomorphism.
To cover this aspect we need to define a ``syntactic category'' for DEL;
this will then lead to a new theory of duality.

\nocite{*}
\bibliographystyle{eptcs}
\bibliography{del.cat}
\end{document}